\title{Conditional risk measures  in a bipartite market  structure}
\author{Oliver Kley\thanks{Center for Mathematical Sciences, Technische Universit\"at M\"unchen,  85748 Garching, Boltzmannstrasse 3, Germany, e-mail: oliver.kley@tum.de, cklu@tum.de}
\and Claudia Kl\"uppelberg\footnotemark[1]
\and Gesine Reinert\thanks{Department of Statistics, University of Oxford, 1 South Parks Road, Oxford OX1 3TG, UK, email: reinert@stats.ox.ac.uk }
}
\numberwithin{equation}{section}
\newtheorem{theorem}{Theorem}[section]
\newtheorem{lemma}[theorem]{Lemma}
\newtheorem{remark}[theorem]{Remark}
\newtheorem{example}[theorem]{Example}
\newtheorem{proposition}[theorem]{Proposition}
\newtheorem{definition}[theorem]{Definition}
\newtheorem{corollary}[theorem]{Corollary}
\newtheorem{fig}[theorem]{Figure}
\newcommand{\bthe}{\begin{theorem}}
\newcommand{\ethe}{\end{theorem}}
\newcommand{\ben}{\begin{enumerate}}
\newcommand{\een}{\end{enumerate}}
\newcommand{\bit}{\begin{itemize}}
\newcommand{\eit}{\end{itemize}}
\newcommand{\beq}{\begin{equation}}
\newcommand{\eeq}{\end{equation}}
\newcommand{\ble}{\begin{lemma}}
\newcommand{\ele}{\end{lemma}}
\newcommand{\bde}{\begin{definition}\rm}
\newcommand{\ede}{\Chalmos\end{definition}}
\newcommand{\bco}{\begin{corollary}}
\newcommand{\eco}{\end{corollary}}
\newcommand{\bpr}{\begin{proposition}}
\newcommand{\epr}{\end{proposition}}
\newcommand{\brem}{\begin{remark}\rm}
\newcommand{\erem}{\Chalmos\end{remark}}
\newcommand{\bproof}{\begin{proof}}
\newcommand{\eproof}{\end{proof}}
\newcommand{\bexam}{\begin{example}\rm}
\newcommand{\eexam}{\Chalmos\end{example}}
\newcommand{\bfi}{\begin{fig}}
\newcommand{\efi}{\end{fig}}
\newcommand{\btab}{\begin{tab}}
\newcommand{\etab}{\end{tab}}
\newcommand{\beao}{\begin{eqnarray*}}
\newcommand{\eeao}{\end{eqnarray*}\noindent}
\newcommand{\beam}{\begin{eqnarray}}
\newcommand{\eeam}{\end{eqnarray}\noindent}
\newcommand{\overliner}{\begin{array}}
\newcommand{\earr}{\end{array}}
\newcommand{\bdis}{\begin{displaymath}}
\newcommand{\edis}{\end{displaymath}\noindent}
\def\N{{\mathbb N}}
\def\P{{\mathbb P}}
\def\E{{\mathbb E}}
\def\R{{\mathbb R}}
\newcommand{\pr}[1]{\P\left(#1\right)}
\def\P{{\mathbb P}}
\def\calb{{\mathcal{B}}}
\def\calr{{\mathcal{R}}}
\def\cale{{\mathcal{E}}}
\def\calw{{\mathcal{W}}}
\def\1{\mathds{1}}
\newcommand{\bone}{\mathds{1}}
\newcommand{\bnull}{{0}}
\newcommand{\stv}{\stackrel{v}{\rightarrow}}
\newcommand{\tto}{{t\to\infty}}
\newcommand{\al}{{\alpha}}
\newcommand{\la}{{\lambda}}
\newcommand{\ga}{{\gamma}}
\newcommand{\si}{{\sigma}}
\newcommand{\eps}{\varepsilon}
\newcommand{\VaR}{{\rm VaR}}
\newcommand{\ICoVaR}{{\rm ICoVaR}}
\newcommand{\SCoVaR}{{\rm SCoVaR}}
\newcommand{\CoTE}{{\rm CoTE}}
\newcommand{\ICoTE}{{\rm ICoTE}}
\newcommand{\SCoTE}{{\rm SCoTE}}
\newcommand{\MCoVaR}{{\rm MCoVaR}}
\newcommand{\MCoTE}{{\rm MCoTE}}
\newcommand{\dep}{{\rm dep}}
\newcommand{\ind}{{\rm ind}}
\newcommand{\AS}{{\rm AS}}
\newcommand{\SA}{{\rm SA}}
\newcommand{\Pois}{{\rm Pois}}
\newcommand{\ov}{\overline}
\newcommand{\Chalmos}{\quad\hfill\mbox{$\Box$}}  
\newcommand{\CK}[1]{{\color{blue} #1}}
\newcommand{\OK}[1]{{\color{BrickRed} #1}}
\newcommand{\GR}[1]{{\color{Brown} #1}}
\begin{document}


\maketitle

\begin{abstract}
In this paper we study the effect of network structure between agents and objects on measures for systemic risk.
We model the influence of sharing large exogeneous losses to the financial or (re)insuance market by a bipartite graph. Using Pareto-tailed losses and multivariate regular variation we obtain asymptotic results for systemic conditional risk measures based on the Value-at-Risk and the Conditional Tail Expectation. These results allow us to assess the influence of an individual institution on the systemic or market risk and vice versa through a collection of conditional systemic risk measures. For large markets Poisson approximations of the relevant constants are provided in the example of an insurance market. The example of an  underlying homogeneous random graph is analysed in detail, and the results are illustrated through simulations.
\end{abstract}

\noindent
\begin{tabbing}
{\em MSC2010 Subject Classifications:} \= primary:\,\,\,90B15\,\,\,
secondary: \,\,\,91B30, 60G70, 62P05,  62E20
\end{tabbing}
{\em Keywords:} Bipartite network,  multivariate regular variation, Value-at-Risk, Conditional Tail Expectation, Expected Shortfall, systemic risk measures, conditional risk measures, Poisson approximation.

\section{Introduction}\label{s1}

Quantitative assessments of financial risk and of (re)insurance risk has to take the interwoven web of agents and business relationships into account in order to capture systemic risk phenomena. Measuring such risks while accounting for this complex system of agents  is an ongoing area of research, see for example \cite{coherent, axiomsystemic, hannes, Huang, jouini, overbeck}. This paper joins the discussion by adapting  conditional systemic risk measures which are based on similar asymptotic arguments as classical risk measures. Making use of results derived in \cite{KKR1}, we illustrate these risk measures on a bipartite graph model for the agent-object market structure, combined with a heavy-tailed loss distribution.

The conditional systemic risk measures  in this paper are conditional versions of
the {\em Value-at-Risk} (\VaR) defined for a random variable $X$ at confidence level $1-\gamma$ as
\begin{gather*}
\VaR_{1-\gamma}(X):=\inf\{ t \geq 0: \pr{ X>t } \leq \gamma \},\quad \gamma \in (0,1),
\end{gather*}
 and the {\em Conditional Tail Expectation} (\CoTE), {also known as {\em Expected Shortfall}}, at confidence level $1-\gamma$, based on the corresponding \VaR,  as
\begin{gather}\label{CoT}
\CoTE_{1-\gamma}(X):=\E[X \mid X> \VaR_{1-\gamma}(X)],\quad \gamma \in (0,1).
\end{gather}

For a systemic risk approach it is of interest to quantify  not only the risk of single agents, but also the market risk,  which is of high relevance to regulators.
Moreover, it is natural to investigate an agent's risk based on the aggregated market risk; see e.g. Theorem~2.4 of \cite{ZhuLi}.
Consequently, we will study conditional systemic risk measures where the conditioning event involves the whole market risk {as well as} its influence on one specific agent.
In the same way, it is of interest to evaluate the market risk conditioned on the event that one agent faces high losses.
Such ideas lead to a classification of conditional systemic risk measures as in Table~\ref{table} (motivated by \cite{DJVZ}) which will be defined in Definition~\ref{def:systemicriskmeasures}. Note that the definition of CoVaR is already present in \cite{CoVar} and the $\ICoTE$ goes back to the so-called Marginal expected Shortfall from \cite{Brownlees}.   \\

\begin{table}[ht]
\begin{center}
\begin{tabular}{cccc }
marginal risk measure & institution  $|$ institution & institution $|$ system  & system  $|$ institution \\
\hline
 VaR& MCoVaR & ICoVaR &SCoVaR\\
 \hline
 CoTE &  MCoTE &  ICoTE& SCoTE\\
\hline\\
\end{tabular}
\caption{\label{table} Classifying conditional systemic risk measures: ``M'' stands for {\em mutual} indicating the risk measure of one institution given high risk in another institution; ``I'' stands for {\em individual} indicating the risk of an individual institution given high market risk; and ``S'' stands for {\em system} indicating the risk of the system given high risk of an institution. }
\end{center}
\end{table}


In \cite{axiomsystemic}, \cite{hannes} and \cite{overbeck}  an axiomatic framework for systemic risk has been suggested. This general framework assumes that a conditional systemic risk measure $\rho$ of a multivariate risk $X = (X_1,\ldots, X_n)$  can be represented as the composition of a univariate (single-agent)  risk measure $\rho_0$ with an {\it{aggregation function}} $\Lambda: \mathbb{R}^n \rightarrow \mathbb{R}^n$, so that $\rho = \rho_0 \circ \Lambda$. 
Here, $\rho_0$ is usually assumed to be convex as well as monotone and positively 1-homogeneous. While the conditions on  $\Lambda$ vary, there is consensus that $\Lambda$ should be positively 1-homogeneous, so that $\Lambda(ax) = a \Lambda(x)$ for $a > 0$. 
We deviate from \cite{axiomsystemic} in that we do not assume that $\Lambda((1,\dots,1)^{\top})=n$. Examples for such aggregation functions are $\Lambda (x) = \| x\| = \left( \sum_{i=1}^n |x_i|^r\right)^\frac{1}{r}$, which is a norm for $r \ge 1$ and a quasi-norm for $ 0 < r < 1$, and $\Lambda (x) = x_i$, the projection onto one coordinate. 
The fact that we do not require  $\Lambda((1,\dots,1)^{\top})=n$ has consequences in terms of system size: Assuming  that $\rho_0$ is monotone, the inequalities $n<\|(1,\dots,1)\|_r$ for $0<r<1$ as well as $n>\|(1,\dots,1)\|_r$ for $1<r\leq \infty$ hold. Therefore, systemic risk may increase faster or increase slower, respectively, as the number number of individual risks grows compared  to systemic risk with respect to a normalized aggregation function. 
Such effects  can be realistic as a larger market may not be proportionally risky to a smaller market due to a balance of risk as is well-known for insurance portfolios.
In addition, we argue that in a small and risky market the regulator may well strive for more risk capital than the sum of risks.
Also moral hazard from the different institutions is well-known and the regulator may guard against this hazard  by choosing a conditional systemic risk measure which is larger than the sum of the individual risks in the market as a quasi-norm would imply. Whatever type of aggregation function is chosen, in practice this is an economical decision. Our framework provides considerable variability in the choice of aggregation function.

In this paper we relate market risk to individual risk in the mathematical framework  of multivariate regular variation. This framework allows us to assess conditional systemic risk measures as in Table~\ref{table} asymptotically in a precise way.

\bde[Conditional systemic risk measures] \label{def:systemicriskmeasures}
Let  $F=(F_1,\ldots,F_q)$ be the  random exposure vector and let $\|\cdot\|$ be a norm or a quasinorm.
For $\ga_i,\ga\in (0,1)$ referring to agent $i$ and the market, respectively, the conditional systemic risk measures from Table \ref{table} are defined as follows: \\[2mm]
(a) \,  {\em Individual Conditional Value-at-Risk}\\
\centerline{$\ICoVaR_{1-\gamma_i, \gamma}(F_i \mid {h(F)} ):=\inf\{ t \geq 0: \pr{ F_i>t \mid h(F)> \VaR_{1-\gamma}(h(F)) } \leq \gamma_i \},$}\\
(b) \,  {\em Systemic Conditional Value-at-Risk} \\
\centerline{$\SCoVaR_{1-\gamma,\gamma_i}(h(F)\mid { F_i} ):=\inf\{ t \geq 0: \pr{ h(F)>t \mid F_{i} > \VaR_{1-\gamma}(F_i)  } \leq \gamma \},$}\\
(c) \,  {\em Mutual Conditional Value-at-Risk}\\
\centerline{$\MCoVaR_{1-\gamma_i, \gamma_k}(F_i \mid F_k):=\inf\{ t \geq 0: \pr{ F_i>t \mid F_{k} > \VaR_{1-\gamma_k}(F_k)} \leq \gamma_i \}$,}\\
(d) \,  {\em Individual Conditional Tail Expectation}\\
\centerline{$\ICoTE_{1-\gamma}( F_{i} \mid h(F)  ) :=\E[F_{i} \mid h(F) > \VaR_{1-\gamma}(h(F))]$,}\\
(e) \, {\em Systemic Conditional Tail Expectation}\\
\centerline{$\SCoTE_{1-\gamma}(h(F) \mid F_i  ) :=\E[  h(F)  \mid F_{i} > \VaR_{1-\gamma}(F_i)]$,}\\
(f) \,  {\em Mutual Conditional Tail Expectation}\\
\centerline{$\MCoTE_{1-\gamma}(F_{i}\mid F_{k}  ) :=\E[F_{i} \mid F_{k} > \VaR_{1-\gamma}(F_k)].$}
For the risk measures (d)-(f) finite first moments of the underlying random variables are required.
\ede

To model the {complex} interaction between economic agents and objects we use a bipartite network, see Figure~\ref{fig1} for a depiction.
The network can be summarised through a  random $q\times d$ {\em weighted adjacency matrix} $A$ given by
\begin{gather}\label{eq2.2}
 A_{ij} = W_{ij}{\1 (i\sim j)},\quad \mbox{ where } \frac{0}{0}:=0,
\end{gather}
 where $W_{ij}$ are positive weights which may depend on the underlying network. 
The objects can generate large losses as portfolios in a hedge fund, for instance, or as catastrophic claims in (re)insurance. In Section~\ref{s4} we shall see that the network is of considerable importance for the asymptotic behaviour of the conditional systemic risk measures.

\begin{figure}[ht]
\vspace*{-1cm}
\begin{center}

\includegraphics[width=0.7\textwidth]{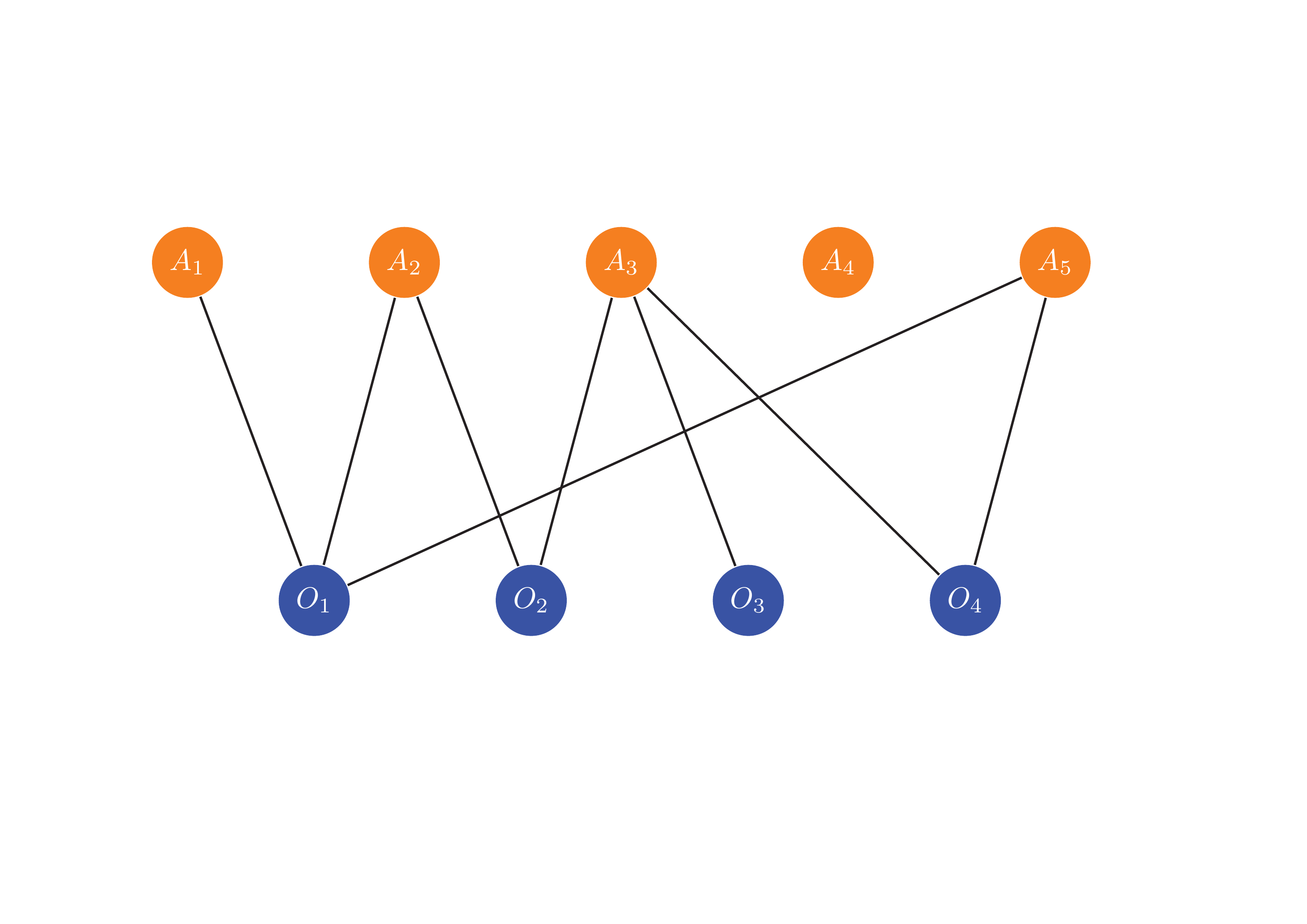}
\vspace*{-2cm}
\end{center}
\label{bipartite}
\caption{\label{fig1} The hierarchical structure of the  market as a {bipartite graph}.}
\end{figure}

Our paper is organised as follows.   In Section~\ref{s2} we formulate the bipartite graph model in detail and present the motivating examples. Section~\ref{s3} summarizes the necessary results from regular variation. Here we also present the asymptotic results of conditional probabilities and conditional expectations. While we formulate our results in the general context of regular variation with arbitrary dependence structure, we single out the two cases,  asymptotic independence and asymptotic complete dependence,  of the loss variables. In Section~\ref{s32} we discuss the asymptotic behaviour of the conditional systemic risk measures in our network model. When introducing conditional systemic risk measures, for the individual risk of every agent in the market we focus on the one-dimensional projections of the exposure vector,  and take norms and quasi-norms as appropriate aggregation functions.

 Finally, in Section~\ref{s4} we also discuss the consequence of the fact that not all claims may be insured or not all assets may find investors, respectively.  We furthermore present the homogeneous model, which exactly has this feature. Calculating the network-dependent quantities which determine the asymptotic behaviour of the conditional systemic risk measures is not always straightforward; hence we provide a Poisson approximation for some standard specifications of the model, with bounds on the total variation distance. Simulations for the homogeneous model illustrate the results.

\section{The bipartite graph model}\label{s2}

Throughout we assume that the objects, which are large claims or losses, have a random amount modelled by random variables $V_j$ for $j=1,\ldots,d$ with Pareto-tails such that, for possibly different $K_j>0$ and tail index $\al > 0$,
\beam\label{pareto}
P(V_j>t)\sim K_j t^{-\al},\quad t \to \infty.
\eeam
(For two functions $f$ and $g$ we write $f(t)\sim g(t)$ as $\tto$ if $\lim_{\tto} f(t)/g(t) = 1$.)
We summarize all objects in the vector $V=(V_1,\dots,V_d)^\top$ and assume that $V$
is independent of the random graph construction, while $V_1,\ldots,V_d$ may not be independent of each other.

Each agent may cover a random amount or proportion of an object, modelled by a random {\em weight matrix}
$W:\Omega\rightarrow \R_{+}^{q\times d}$, which satisfies the integrability condition $\E[ \|W\|^{\alpha+\delta}]< \infty$ for some matrix norm $\|\cdot\|$ and some $\delta>0$. We assume that $W_{ij}> 0 $ for all $(i,j)$ such that $i \sim j$. 
The random variable $\1 ( i\sim j )$ equals 1 whenever agent $i$ holds a contractual relationships to object $j$, and 0 otherwise.
The proportion of object $j$ which affects agent $i$ is represented by
$W_{ij} \1 ( i\sim j). $ Then $F_i:=\sum_{j=1}^d W_{ij} \1( i\sim j) $ denotes the {\em exposure of agent $i$} and
$F=(F_1,\dots,F_q)^\top$ is the vector of the joint exposures of the agents in the market. Hence, the {\em weighted adjacency  matrix} $A: \Omega \rightarrow \R^{q\times d} $  representing the market structure is given by
\begin{gather}\label{A}
A_{ij}= W_{ij}\1(i\sim j ),\quad \mbox{ where } \frac{0}{0}:=0.
\end{gather}
Consequently, the vector $F$ of agent exposures is the matrix-vector product
 \begin{gather}\label{F}
F=AV.
\end{gather}

\bexam[Large reinsurance risks, \cite{KKR1}]\label{largeclaims} In this example, 
agents are reinsurance companies and objects are large claims.
Under the simplified assumption that  claims are split into equal proportions among all agents which  insure this risk, the market matrix $A$ is 
\begin{gather}\label{eq2.2}
A_{ij} = \frac{\1 (i\sim j)}{\deg(j)},
\end{gather}
 where $\text{deg}(j)$ denotes the number of agents that insure object $j$.
 \eexam

 \bexam[Coupled portfolios of highly risky assets, \cite{Ibragimov2005}]\label{portfolios}
In this example, agents are investors and objects are investment opportunities.
Each agent $i$ has a certain amount of capital to invest, say $C_i>0.$
Again for simplicity, we assume that he splits his money in equal portions to all the assets he has chosen to invest in.
This results in a market matrix $A$  given by
\begin{gather}
A_{ij}=C_i \frac{\1 (i\sim j )}{\text{deg}(i)}
\end{gather}
where $\text{deg}(i)$ denotes the number of different assets agent $i$ invests in.
\eexam

We consider risk measures of $F=AV$, where the random matrix $A$ models the network structure of the market.
Instead of attributing a risk measure to an agent's exposure or to the market exposure, we write for short an agent's risk or the market risk.

In \cite{KKR1} it was shown
that under the assumption of regularly varying exposure vectors the asymptotic behaviour of the VaR and the CoTE can be described using the constants
\begin{equation} \label{VaRconst}
 C^i_{\ind} =  C^i_{\ind} (A) := \sum_{j=1}^{d} K_j\E A_{ij}^{\alpha},  \,\, i=1, \ldots, q,  \, \mbox{ and } \, C_{\ind}^S = C_{\ind}^S (A) = \sum_{j=1}^d K_j \E \| A e_j \|^\alpha,
\end{equation}
as well as \begin{gather}\label{VaRconst_dep}
C_\dep^i = C_\dep^i (A) :=  \E (AK^{1/\al}1)_{i}^{\alpha }, \,\, i=1, \ldots, q,   \,  \mbox{ and }\quad  C_\dep^S = C_\dep^S (A) = \E \|AK^{1/\al}1\|^\alpha,
\end{gather}
where $1$ is the $d-$dimensional vector with entries all equal 1 and $K^{1/\al}={\rm diag}(K_1^{1/\al},\ldots,K_d^{1/\al})$ is a $d\times d $ diagonal matrix. Here  the subscripts $\ind$ and $\dep$ refer to asymptotically independent or asymptotically fully dependent components of the $V_j$'s, respectively.

\ble[Corollaries~3.6 and~3.7 of \cite{KKR1}]\label{Cor:VaR}
Let $\al>0$ and $F=(F_{1},\dots,F_{q})^{\top}$ the vector of the agents' exposures.
\begin{itemize} 
\item[(a)] 
The individual Value--at--Risk of agent $i\in\{1,\ldots,q\}$ shows the asymptotic behaviour
 \begin{gather}\label{uniVaRasym}
\VaR_{1-\gamma}(F_{i})\sim C^{1/\alpha} \gamma^{-1/\alpha} ,\quad \gamma\rightarrow 0,
\end{gather}
with either $C= C^{i}_{\ind}$ or $C=C^{i}_\dep$ in case $V_1,\dots,V_d$ are asymptotically independent or asymptotically fully dependent.
The market Value--at--Risk of the aggregated vector $\|F\|$ satisfies
\begin{gather}\label{marketVaRasym}
\VaR_{1-\gamma}(\|F\|) \sim C^{1/ \alpha}\gamma^{-1/\alpha},\quad \gamma\rightarrow 0,
\end{gather}
with either $C= C^S_\ind$ or $C=C^S_\dep$ in case  $V_1,\dots,V_d$ are asymptotically independent or asymptotically fully dependent.
\item[(b)]
Let $\alpha >1.$ 
The individual Conditional Tail Expectation of  agent $i\in\{1,\ldots,n\}$ shows the asymptotic behaviour
$$\CoTE_{1-\gamma}(F_{i}) \sim \frac{\alpha}{\alpha-1} \VaR_{1-\gamma}(F_{i})\sim \frac{\alpha}{\alpha-1} C^{1/\alpha} \gamma^{-1/\alpha}   \,,\quad \gamma\rightarrow 0,$$
with either $C= C^S_{\ind}$ or $C=C^S_\dep$ in case  $V_1,\dots,V_d$ are asymptotically independent or asymptotically fully dependent.
The market Conditional Tail Expectation of  the aggregated vector $\|F\|$ satisfies
$$\CoTE_{1-\gamma}(\|F\|) \sim \frac{\alpha}{\alpha-1} \VaR_{1-\gamma}(\|F\|)\sim \frac{\alpha}{\alpha-1} C^{1/ \alpha}\gamma^{-1/\alpha}  \,,\quad \gamma\rightarrow 0,$$
with either $C= C^S_{\ind}$ or $C=C^S_\dep$ in case $V_1,\dots,V_d$ are asymptotically independent or asymptotically fully dependent.
\end{itemize} 
\ele

For the asymptotic behaviour of the Value-at-Risk and of the Conditional Tail Expectation the underlying network model enters only through the constants \eqref{VaRconst} and \eqref{VaRconst_dep}. Many underlying networks,  even networks for which the adjacency matrix is deterministic,  may hence give rise to the same asymptotic behaviour.

When the Pareto-tailed losses are independent, the constant \eqref{VaRconst} with superscript $i$ indicates the individual setting of agent $i$, whereas $S$ refers to the systemic setting.
We contrast this with the fully dependent case;
  the corresponding quantities are {given in \eqref{VaRconst_dep}}.
In general, small constants are more desirable, indicating a smaller risk.
The case of fully dependent objects is equivalent to having a single source of risk, but with the loss to be unevenly distributed among the agents.

As indicated in \cite{KKR1} these two extreme dependence cases give rise to risk bounds (cf. \cite{KK3}), which are determined using  the constants {given in \eqref{VaRconst}} and \eqref{VaRconst_dep}.

\section{Asymptotic results from multivariate regular variation}\label{s3}

To obtain asymptotic results as in Lemma~\ref{Cor:VaR} for the conditional systemic risk measures from Definition~\ref{def:systemicriskmeasures} in a more general framework, we first extend classical results for regular variation to continuous 1-homogeneous functions. Examples for such continuous 1-homogeneous functions are projections of the vector $F=(F_1,\dots,F_q)^{\top}$ on the $i$-th coordinate $F_i$, and  the norm or quasi-norm of the vector $F$,  which link up with  Section~\ref{s2}.

Our framework will be regular variation of the random vector of exposures $F$, which follows from the Pareto-tailed claims and the dependence structure introduced by the bipartite graph; cf. \cite{KKR1}.
There are several equivalent definitions of multivariate regular variation; cf. Theorem~6.1 of \cite{Resnick2007} and Ch.~2.1 of \cite{BasrakPhD}.
Also notions like {\em one point uncompactification} and {\em vague convergence} are defined there, referring to \cite{Resnick2007}, Section~6.1.3, for more background.

For $d\in\N$, let $\mathbb{S}_{+}^{d-1}=\{ x\in \R^{d}_{+}: \ \| x\|=1 \}$ denote the positive unit sphere in $\R^{d}$ with respect to an arbitrary norm $\|\cdot\|$ on $\R^d$ so that $\| e_j\| =1$ for all unit vectors $e_j$.
Furthermore, we shall use the notation $\cale:=\overline\R_+^d\setminus{\{\bnull\}}$
 with $\overline{\R}_+=[0,\infty]$, $\bnull$ is the $d-$dimensional vector with entries all equal to 0, and  $\calb=\calb(\mathcal{E})$ denotes the Borel $\si$-algebra with respect to the so-called {one point uncompactification}.

\bde\label{RV}
A random vector $X$ with state space $\mathcal{E}$
is called {\em multivariate regularly varying} if there is a Radon measure   $\mu\not\equiv 0$ on $\mathcal{B}(\cale)$ with $\mu(\overline\R_+^d\setminus\R^d_+)=0$ and
\begin{gather}\label{basrakmu}
\frac{\P(X\in t\cdot)}{\P(\|X\|> t)}\stv \mu(\cdot),\quad\tto,
\end{gather}
where $\stv$ denotes vague convergence.
In this case there exists some $\al>0$ such that the limit measure is homogeneous of order $-\al$:
$$\mu(u S)=u^{-\al}\mu(S),\quad u>0,$$
for every  $S\in\calb(\mathcal{E})$ satisfying $\mu(\partial S)=0$. The measure $\mu$ is called \textit{intensity measure of $X.$}
\\[2mm]
The {\em  tail index} $\al > 0$ is also called the {\em index of regular variation} of $X$, and we write $X \in \mathcal{R}(-\alpha).$
\ede

Regular variation of $V$ implies regular variation of $F$ under a Breiman condition on the weight matrix $W$ from \eqref{A}.
We shall use the following result, which is based on Proposition A.1 in \cite{Basrak200295}.

\begin{proposition}\label{th1}
Let $V:=(V_1,\ldots,V_d)^\top$ be multivariate regularly varying having components with Pareto-tails $\P (V_j>t) \sim K_j t^{-\al}$ as $\tto$ for $K_j,\al>0$ as in \eqref{pareto}
with  intensity measure $\mu$ as in \eqref{basrakmu}.
Furthermore, let the weight matrix $W:\Omega\rightarrow \R_{+}^{q\times d}$ satisfy $\E[ \|W\|^{\alpha+\delta}] < \infty$ for some $\delta>0$.
Then  the random vector $F=AV$ with $A$ as in \eqref{A}  belongs to $\calr(-\al)$.
{Let $h:\overline{\R}^q\setminus \{0\}\to \overline{\R}^k\setminus \{0\}$ for $k\in\N$ be
a continuous 1-homogeneous function.}
Then we have on $\mathcal{B}(h(\overline{\R}_+^{q}\setminus\{\bnull\}))$:
 \beam\label{hlimit}
 \frac{\pr{h(F)\in t \cdot }}{\pr{\|V\|>t}} \stv \E\mu\{x\in\R_+^d : h(A x)\in \cdot\},\  t \rightarrow \infty.
 \eeam
\end{proposition}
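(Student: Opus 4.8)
The plan is to prove the statement in two stages: first I would establish that the exposure vector $F=AV$ is itself regularly varying with the intensity measure $\E\mu\{x:Ax\in\cdot\}$, and then I would transport this regular variation through $h$ by a continuous-mapping argument that exploits the $1$-homogeneity of $h$. The composition structure $h(F)=h(AV)$ makes this two-step decomposition natural, since the first step is a Breiman-type statement about random linear images and the second is purely a change-of-variables through a proper, homogeneous map.

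For the first stage, I would note that $A_{ij}=W_{ij}\1(i\sim j)$ gives $\|A\|\le\|W\|$ for a suitable matrix norm, so the integrability hypothesis yields $\E[\|A\|^{\al+\de}]<\infty$, and by assumption $A$ is independent of $V$. These are exactly the hypotheses of the Breiman-type Proposition~A.1 of \cite{Basrak200295}, which I would invoke to conclude that $F=AV\in\calr(-\al)$ with
\begin{gather*}
\frac{\pr{F\in t\,\cdot}}{\pr{\|V\|>t}}\stv \E\mu\{x\in\R_+^d:\ Ax\in\cdot\},\qquad\tto.
\end{gather*}
Here the expectation is over the law of $A$; the measurability of $A\mapsto\mu\{x:Ax\in S\}$ and the interchange of $\E$ with $\mu$ are part of (the proof of) that proposition, so I would simply record that it applies verbatim in our set-up.

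For the second stage, I would first check that $h$ is \emph{proper} for the one-point uncompactification, i.e.\ that preimages of sets bounded away from $\bnull$ are again bounded away from $\bnull$. Since $h$ is continuous and never equal to $\bnull$ on the compact sphere $\mathbb{S}^{q-1}_+$, the quantities $m:=\min_{\mathbb{S}^{q-1}_+}\|h\|$ and $M:=\max_{\mathbb{S}^{q-1}_+}\|h\|$ satisfy $0<m\le M<\infty$, and $1$-homogeneity upgrades this to $m\|y\|\le\|h(y)\|\le M\|y\|$ for all $y\in\overline{\R}_+^q\setminus\{\bnull\}$; the lower bound $\|y\|\ge\|h(y)\|/M$ is precisely the properness needed for the continuous-mapping theorem for vague convergence (cf.\ \cite{Resnick2007}). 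Moreover, $1$-homogeneity gives, for every Borel set $S\subseteq h(\overline{\R}_+^q\setminus\{\bnull\})$, the identity
\begin{gather*}
\{h(F)\in tS\}=\{F\in t\,h^{-1}(S)\},\qquad h^{-1}(S):=\{y\in\overline{\R}_+^q\setminus\{\bnull\}:\ h(y)\in S\},
\end{gather*}
since $h(F)/t=h(F/t)$. Combining this with the first stage, applied either to the continuity sets $h^{-1}(S)$ or, equivalently, by testing the vague limit against $g\circ h$ for $g\in C_c^+$, I obtain
\begin{gather*}
\frac{\pr{h(F)\in tS}}{\pr{\|V\|>t}}=\frac{\pr{F\in t\,h^{-1}(S)}}{\pr{\|V\|>t}}\ \longrightarrow\ \E\mu\{x:\ Ax\in h^{-1}(S)\}=\E\mu\{x:\ h(Ax)\in S\},
\end{gather*}
which is exactly the claimed limit, namely the pushforward of $\E\mu\{x:Ax\in\cdot\}$ under $h$.

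The hard part will be the rigorous justification of the continuous-mapping step under \emph{vague} (rather than weak) convergence. The properness bound $m\|y\|\le\|h(y)\|\le M\|y\|$ is the crucial ingredient that guarantees $g\circ h\in C_c^+(\overline{\R}_+^q\setminus\{\bnull\})$ whenever $g\in C_c^+$, so that the vague limit in the first stage may legitimately be evaluated on these test functions; without it the supports could fail to stay bounded away from $\bnull$ and the limit would be meaningless. A secondary technical point to handle carefully is the behaviour at the boundary points at infinity and the verification that the relevant sets $h^{-1}(S)$ are continuity sets for $\E\mu\{x:Ax\in\cdot\}$, i.e.\ that this limit measure assigns no mass to $\partial h^{-1}(S)$; I would dispose of this by working throughout with the $C_c^+$ formulation, where these boundary issues do not arise.
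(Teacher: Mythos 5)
Your proposal is correct and follows essentially the same route as the paper: first the Breiman-type Proposition~A.1 of \cite{Basrak200295} gives vague convergence of $F=AV$ with limit $\E\mu\{x\in\R_+^d : Ax\in\cdot\}$, and then the $1$-homogeneity identity $\{h(F)\in tB\}=\{F\in t\,h^{-1}(B)\}$ together with the observation that $h^{-1}(B)$ stays bounded away from the origin (hence relatively compact) transports the limit through $h$. Your explicit properness bound $m\|y\|\le\|h(y)\|\le M\|y\|$ and the $C_c^+$ test-function treatment of boundary sets are only presentational refinements of the paper's argument, which instead notes that $h^{-1}(B)$ is bounded away from zero by continuity of $h$ and $h(0)=0$, and controls continuity sets via the inequality $\E\mu\circ A^{-1}(\partial h^{-1}(B))\le \E\mu\circ A^{-1}\circ h^{-1}(\partial B)$.
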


\begin{proof}
{Vague convergence of $F$ is given by} Proposition~A.1 in \cite{Basrak200295} 
and is equivalent to
\begin{gather}
 \frac{\pr{F\in t B }}{\pr{\|V\|>t}} \rightarrow \E\mu\{x\in\R_+^d : A x\in B\}, t \to \infty,
\end{gather}
for all relatively compact sets $B\in \mathcal{B}(\overline{\mathbb{R}}^{q}_+\setminus \{0\})$ with $\E\mu \circ A^{-1}(\partial B)=0 $.
Furthermore, by 1-homogeneity of $h$, for $t > 0$,  $\{ h(F) \in tB \}=\{ F \in t h^{-1}(B) \}.$ 
Note also that every $B$, which  is bounded away from zero,
is relatively compact in the topology we use.
Since $h^{-1}(B)$ is bounded away from zero by continuity of $h$ and the fact that $h(0)=0$, $h^{-1}(B)$ is also relatively compact. 
Moreover, $\E \mu \circ A^{-1} (\partial h^{-1} (B)) \leq \E \mu \circ A^{-1} \circ h^{-1}(\partial B )$.\\
Putting all this together, for  every relatively compact set $B \in \mathcal{B}(h(\overline{\R}_+^{q}\setminus\{\bnull\}))$ with $\E \mu \circ A^{-1} \circ h^{-1}(\partial B )=0$
we have, as $t \rightarrow \infty$, 
\beao\label{ohnehlimit}
 \frac{\pr{h(F)\in tB  }}{\pr{\|V\|>t}} = \frac{\pr{F\in th^{-1}(B)  }}{\pr{\|V\|>t}}  \rightarrow \E\mu\{x\in\R_+^d : A x\in h^{-1}(B)  \}=\E\mu\{x\in\R_+^d : h(A x)\in (B)  \}.
 \eeao
This is equivalent to vague convergence in \eqref{hlimit}.
\end{proof}

For the  extreme dependence case corresponding to vectors $V$ with asymptotically independent components the reference measure $\mu$ has support on the axes; in the extreme dependence case corresponding to vectors $V$ with asymptotically fully dependent components the reference measure $\mu$ has support  on the line $\{sK^{1/\al}1 : s>0\}$. This difference in support is reflected in the difference between \eqref{VaRconst}
and \eqref{VaRconst_dep} and affects the behaviour of aggregated exposures.


\bpr\label{singleFasym}
Assume the situation of Proposition~\ref{th1}. For the aggregated exposures  $h(F)$ we obtain
\beao \label{agrregatedtail}
\P(h(F)> t)\sim   C^h  t^{-\alpha}, \quad\tto,
\eeao
with
\begin{gather}\label{ChindChdep}
C^h=C^h_{\ind}=\sum_{j=1}^d K_j \E h^\alpha( A e_j )\quad\mbox{and}\quad
C^h =C^h_\dep(h)=\E h^\alpha(AK^{1/\al}1)
\end{gather} 
if $V_1,\dots, V_d$ are asymptotically independent or asymptotically fully dependent, respectively.
\epr

\begin{proof}
 The assertion can be shown in an analogous way to Theorem~3.4 of \cite{KKR1}.
\end{proof}

The following result gives limit relations in the most general situation, without any restriction on the dependence in the exposure vector.

\bthe\label{rvagg}
Let $g,h:\R_+^q\rightarrow \R_+$ be continuous $1-$homogeneous functions and assume the situation of Proposition~\ref{th1}.
Then for $u\in (0,\infty)$, the following assertions hold:
\begin{itemize} 
\item[(a)]
\, $\lim\limits_{\tto}\pr{g(F) >t\mid h(F)>ut }
\, = \, u^{\al} \dfrac{\E \mu\circ A^{-1} (\{x\in\R_+^q : h(x) >u, g(x)>1\})}
{\E \mu\circ A^{-1} (\{x\in\R_+^q : h(x) >1\})}.$
\item[(b)] If $g$ is additionally  bounded or has compact support on $\overline{\R}_+^{d}\setminus \{ 0\},$ then
\beam\label{ESgh}
\lim_{\tto}\E[g(F) \mid h(F)>t ]
& \sim &  \frac{t}{\tilde\mu(\{ h(x) >1\})} \int_{h(x)>1} g(x) \tilde\mu(dx),
\eeam
where {$\tilde\mu(\cdot) = \E \mu\circ A^{-1} (\cdot)$.}
\end{itemize} 
\ethe

\bproof
(a) We use Proposition~\ref{th1} to obtain
\beao
\pr{g(F) >t\mid h(F)>ut }
& = &\int_{h(F)>ut \atop g(F)>t} \frac{d\P}{\pr{h(F)>ut}}\\
& = & \int_{h(x)>u \atop g(x)>1} \frac{\pr{F\in t dx}}{\pr{\|V\|>t}} \frac{\pr{\|V\|>t}}{\pr{h(F)>ut}}.
\eeao
The second ratio converges by Proposition~\ref{th1}(a) and also the first, when taking there for $h$ the identity function.  The result follows then by vague convergence.\\
(b) Using 1-homogeneity of $g$ and Proposition~\ref{th1},
\beam
\E[g(F)\mid h(F) > t] &=& \frac1{\P(h(F)>t)}\int_{h(x)>t} g(x) \P(F\in dx) \nonumber \\
&=& \frac1{\P(h(F)>t)}\int_{h(x)>1} g(tx) \P(F\in t dx)\nonumber \\
&=& \frac{\P(\|V\|>t)}{\P(h(F)>t)}\int_{h(x)>1} {g(tx)}  \frac{\P(F\in t dx)}{\P(\|V\|>t)} \nonumber \\
& \sim & \frac{t}{\tilde\mu(\{ h(x) >1\})} \int_{h(x)>1} g(x) \tilde\mu(dx). \label{weakorvague}
\eeam
Recall that the sequence of bounded measures in \eqref{hlimit} converges to a bounded measure vaguely if and only if it converges weakly to this measure, see Theorem~2.1.4 in \cite{BasrakPhD} for further details. Hence, either assumptions  on $g$ in  (b) 
is sufficient to achieve convergence for \eqref{weakorvague}.
\eproof

\bco\label{rvaggind}
Let $u\in(0,\infty)$ and assume the situation of Proposition~\ref{th1}.
Recall the constants $C_{\ind}^h $ and $C_\dep^h$ from \eqref{ChindChdep}.\\
(a) \, If $V_1,\dots, V_d$ are asymptotically independent, then
\begin{gather} \label{tailindepgh}
   \lim_{t\to \infty}  \pr{ g(F)>t \mid h(F)>ut}= (C^h_{\ind})^{-1}\sum_{j=1}^{d}  \E \min\{ h^{\alpha}(AK^{1/\al}e_j), u^{\alpha}g^{\alpha}(A K^{1/\al}e_{j}) \}.
    \end{gather}
(b) \, If $V_1,\dots, V_{d}$ are asymptotically fully dependent, then
\begin{gather} \label{taildepgh}
   \lim_{t\to \infty}  \pr{ g(F)>t \mid h(F)>ut}=  ( C^h_\dep )^{-1}\E \min\{ h^{\alpha}(AK^{1/\al}\bone), u^{\alpha}g^{\alpha}(AK^{1/\al}1) \}
    \end{gather}
\eco

\bproof
The proof is similar to the proof of Theorem~3.4 of \cite{KKR1}.
For asymptotically independent claims $V_1,\dots,V_d$ we obtain {by Theorem~\ref{rvagg}(a) for the numerator}
\begin{eqnarray*}
\E \mu \circ A^{-1}(\{ h(x)>u, g(x)>1 \})
&=&(\sum_{j=1}^d K_j)^{-1}\sum_{j=1}^{d} K_j\E \min\{u^{-\al} h^{\alpha}(Ae_j), g^{\alpha}(Ae_{j})  \},
\end{eqnarray*}
and the expression in the denominator is {
\beam\label{nom}
(\sum_{j=1}^d K_j)^{-1}\sum_{j=1}^{d} K_j \E\{u^{-\al} h^{\alpha}(Ae_j)\} = (\sum_{j=1}^d K_j)^{-1} C_\ind^h, 
\eeam }
which yields \eqref{tailindepgh}.
In the case of asymptotically fully dependent claims we get by Theorem~3.4(b),
\begin{gather*}
\E \mu \circ A^{-1}(\{ h(x)>u, g(x)>1 \})= \|K^{1/\al}1\|^{-\al} \E \min\{u^{-\al} h^{\alpha}(AK^{1/\al}1), g^{\alpha}(AK^{1/\al}1)\},
\end{gather*}
giving with corresponding nominator relation \eqref{taildepgh}.
\eproof

\brem
In extreme value theory the {\em tail dependence coefficient} is usually defined for two possibly  dependent random variables $X_1,X_2$ with the same marginal distribution function as
$\lim_{x\to\infty} \P(X_2>x\mid X_1>x)$ provided that this limit exists (e.g. Section~9.5  in \cite{Beirlant}).
The resulting number is interpreted as a measure describing coinciding large losses.
The conditional probabilities in Definition~\ref{def:systemicriskmeasures} (a)-(c) are defined via such conditional probabilities, allowing for asymmetry.
As a consequence of regular variation the limits of the following conditional probabilities  as well as the conditional expectations can be computed explicitly: if 
 $\gamma_g/\gamma_h \rightarrow 1 $, then 
\beao
\lefteqn{\pr{g(F)>\VaR_{1-\ga_g}(g(F))\mid h(F)> \VaR_{1-\ga_h}(h(F))}}\\
&\sim \pr{ h(F)> \VaR_{1-\ga_h}(h(F))\mid   g(F)>\VaR_{1-\ga_g}(g(F))}
\eeao
and if $\gamma=\gamma_g=\gamma_h$, then we recognize 
\begin{gather}\label{coeffoftaildep}
\lim_{\gamma \to 0} \pr{g(F)>\VaR_{1-\ga}(g(F))\mid h(F)> \VaR_{1-\ga} }
\end{gather}
as the usual (symmetric) {tail dependence coefficient}, e.g. defined in \cite{Beirlant}, p.~343, eq. (9.75).
\erem

\bco\label{rvaggindES}
Let $u\in(0,\infty)$ and assume the situation of Proposition~\ref{th1}.
Recall the constants from \eqref{ChindChdep}. \\
(a) \, If $V_1,\dots, V_d$ are asymptotically independent, we find
\begin{gather} \label{tailindepESgh}
   \E[ g(F) \mid h(F)>t ] \sim \frac{\al}{\al-1} (C^h_{\ind})^{-1}\sum_{j=1}^{d}
   \E g(AK^{1/\al}e_j) h^{\alpha-1}(A K^{1/\al} e_{j}) t.
    \end{gather}
(b) \, If $V_1,\dots, V_{d}$ are asymptotically fully dependent, we find
\begin{gather} \label{taildepESgh}
   \E [g(F)\mid h(F)>t] \sim  \frac{\al}{\al-1} ( C^h_\dep )^{-1}
   \E g(AK^{1/\al}1) h^{\alpha-1}(AK^{1/\al}1) t.
    \end{gather}
(c) \, For $g=h$ we obtain the classical Conditional Tail Expectation \eqref{CoT}.
\eco

\bproof
(a) We  evaluate the integral in \eqref{ESgh} as
\begin{align*}
& \E\int_{h(x)>1}g(x) \mu\circ A^{-1}(dx)\\
&= (\sum_{j=1}^d K_j)^{-1} \E \sum_{j=1}^{d}\int_{h(x)>1, x\in \{u AK^{1/\alpha}e_j:u>0 \}} g(x) \nu^{*}(\{ se_{j} \in \R^{d} :  s A K^{1/\al}e_{j} \in dx \}  ),
\end{align*}
where  the measure  $\nu^{*}$, called the canonical exponent measure, is 
 related to the exponent measure $\nu$ of the vector $V=(V_{1},\dots,V_{d})$ by $\nu=\nu^{*}\circ K^{-1/\al}$, see Lemma~2.2 in \cite{KKR1}.
For independent components $\nu^{*}$ is concentrated on the axes.
We take into account that, whenever $x \in \{ u AK^{1/\al}e_{j} :\ u >0  \}$, the equality
\begin{gather}
\nu^{*}(\{ se_{j} \in \R^{d} :  s A K^{1/\al}e_{j} \in dx \}  ) = \alpha u^{-\alpha-1}du
\end{gather}
 holds.
Integration over the set $\{ u>1/ h(AK^{1/\al}e_j) \}$ yields
\begin{gather*}
\int_{1/h(AK^{1/\al}e_j)}^\infty\alpha  g(AK^{1/\al}e_j)u^{-\alpha} du = \frac{\alpha}{\alpha-1} g(AK^{1/\al}e_{j}) h^{\alpha-1}(AK^{1/\al}e_j),
\end{gather*}
implying
\begin{gather}
\int_{h(x)>1}g(x)\E \mu\circ A^{-1}(dx)= \frac{\alpha}{\alpha-1}(\sum_{j=1}^d K_j)^{-1}  \sum_{j=1}^{d}  \E[g(AK^{1/\al}e_{j})  h^{\alpha-1}(AK^{1/\al}e_j)].
\end{gather}
{Since $\tilde\mu(\{h(x)>1\}) = \int_{h(x)>1} \E\mu\circ A^{-1}(dx) = (\sum_{j=1}^d K_j)^{-1} C_\ind^h$}, we get \eqref{tailindepESgh}. \\[2mm]
(b) To show \eqref{taildepESgh}, 
recall that in the presence of full dependence the canonical exponent measure $\nu^*$ for fully dependent components is concentrated on the diagonal $\{ u {1}\in \R^{d}: u >0  \}$ and connected to the exponent measure $\nu$ of $V$ by $\nu= \nu^{*}\circ K^{-1}$, see also Lemma~4.2 in \cite{KKR1}. Hence,
\begin{align*}
&\int_{h(x)>1}g(x) \E \mu\circ A^{-1}(dx)\\
&=\| K^{1/\al}{1}\|^{-\al} \E \int_{h(x)>1, x \in \{uA K^{1/\al}{1}:u>0 \}}g(x)   \nu^{*}(\{ s {1} \in \R^{d} :  s A K^{1/\al}{1} \in dx \}  ).
\end{align*}
For  $x\in \{ u AK^{1/\al}{1}:\ u>0  \}, $ we have $\E \nu(\{ sK^{1/\al}{1} \in \R^{d} :  s AK^{1/\al}{1} \in dx \}  )= \alpha u^{-\alpha-1}du,$
which yields
\begin{align*}
\int_{h(x)>1}g(x)\E \mu\circ A^{-1}(dx)&=
\|K^{1/\al}{1}\|^{-\alpha} \E \int_{1/h(AK^{1/\al}{1})}^{\infty}\alpha u^{-\alpha}g(AK^{1/\al}{1})du\\
&=\|K^{1/\al}{1}\|^{-\alpha} \frac{\alpha}{\alpha-1}\E h^{\alpha-1}(AK^{1/\al}{1}) g(AK^{1/\al}{1}).
\end{align*}
This leads to \eqref{taildepESgh}.
\eproof


\section{The conditional systemic  risk measures}\label{s32}

We are now ready to  investigate  the conditional systemic risk measures from Definition~\ref{def:systemicriskmeasures} of a financial or insurance market
based on the bipartite graph represented by the random matrix $A=(A_{ij})_{i,j=1}^{q,d}$ as in \eqref{A} with $q$ agents and $d$ objects.

First, we  assess to which extent the risk of agent $i$ is affected by high market losses.
Second, we evaluate the influence of the individual agent's risk to the market risk, reflecting the systemic importance of an individual agent.
Third, we consider the influence of the risk of agent $k$ on the risk of agent $i$. 
Throughout this section we assume that the claims $V_1, \ldots, V_d$ are asymptotically independent.

In this section we return  to the multivariate risk measures from Definition ~\ref{def:systemicriskmeasures}  applied to aggregation functions;  we take again  $g(F)$ as the projection on some  component and $h(F)=\|F\|$ as a norm; here we can even allow for $\| \cdot \|$ to be only a quasi-norm. 
In particular this norm, or quasi-norm, does not have to equal the  reference norm in the definition of regular variation in \eqref{basrakmu}. 

The following result determines the probability of joint large losses for individual institutions and the financial system in different conditional situations.

\begin{proposition} \label{VaRcond}
Let $V_1,\dots,V_d$ be asymptotically independent and $u>0$. Assume that the conditions of Proposition~\ref{th1} are satisfied. Moreover 
assume that $\ga\to 0$ and $ \kappa\in (0,\infty)$. 
Then
\beam
\label{eq:VaR:FigivenFindep}
\pr{F_i>\VaR_{1-\ga \kappa}(F_i)\mid \|F\|> \VaR_{1-\ga}(\|F\|)}
& \to &
\sum_{j=1}^{d} K_j  \E \min\Big\{  \frac{\|Ae_j\|^\al}{C^S_\ind}, \kappa \frac{ A_{ij}^{\alpha}}{C^{i}_\ind}\Big\}\\
\label{eq:VaR:FgivenFiindep} \pr{\|F\|> \,\VaR_{1- \kappa \ga}(\|F\|)\mid F_i>\VaR_{1-\ga }(F_i)}
& \to &
\sum_{j=1}^{d} K_j \E \min\Big\{ \kappa \frac{\|Ae_j\|^\al}{C^S_\ind},   \frac{A_{ij}^{\alpha}}{C^{i}_\ind} \Big\}\\
\label{eq:VaR:FigivenFkindep}
\pr{F_i> \,\VaR_{1-\ga \kappa}(F_i)\mid F_k>\VaR_{1-\ga }(F_k)} & \to &
\sum_{j=1}^{d} K_j \E \min\Big\{ \kappa \frac{A_{ij}^\al}{C^i_\ind}, \frac{A_{kj}^{\alpha}}{C^k_\ind} \Big\}.
\eeam
Moreover, for the Conditional Tail Expectations, if $\alpha > 1  $ then 
\beam\label{ExpectedShortfallFi}
\ICoTE_{1-\gamma}(F_{i}  \mid  \| F \|) &\sim &\frac{\alpha}{\alpha-1}(C^S_\ind)^{1/\al-1}
\sum_{j=1}^d K_j \E[A_{ij}\|Ae_j\|^{\al-1}] \gamma^{-1/\al}.\\
\SCoTE_{1-\gamma}(  \| F \| \mid F_{i}) &\sim &\frac{\al}{\al-1}(C^{{i}}_\ind)^{1/\al-1}\sum_{j=1}^{d}K_j \E[ A_{ij}^{\alpha-1}\|Ae_j\|]\gamma^{-1/\al}.\label{ExpectedShortfallMulti}\\
\MCoTE_{1-\gamma} (F_i\mid F_k) & \sim & \frac{\al}{\al-1}(C^k_\ind)^{1/\al-1}\sum_{j=1}^{d}K_j\E[ A_{kj}^{\alpha-1} A_{ij}]\gamma^{-1/\al}.
\label{ExpectedShortfallMutual}
\eeam
\end{proposition}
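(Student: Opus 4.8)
The plan is to read off each of the six assertions from the general results already established: the three tail‑dependence limits \eqref{eq:VaR:FigivenFindep}--\eqref{eq:VaR:FigivenFkindep} from Corollary~\ref{rvaggind}(a), and the three Conditional Tail Expectations \eqref{ExpectedShortfallFi}--\eqref{ExpectedShortfallMutual} from Corollary~\ref{rvaggindES}(a). In every case I would first identify the correct event function $g$ and conditioning function $h$, and then fix the threshold ratio $u$ via the Value‑at‑Risk asymptotics of Lemma~\ref{Cor:VaR}(a). For the three probabilities one takes $g,h$ to be, respectively, the projection $x\mapsto x_i$ and the norm $x\mapsto\|x\|$ (individual case), the norm and the projection $x\mapsto x_i$ (systemic case), and the two projections $x\mapsto x_i$, $x\mapsto x_k$ (mutual case). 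Then $h^\al(Ae_j)$ and $g^\al(Ae_j)$ reduce to $\|Ae_j\|^\al$ or to the entries $A_{ij}^\al,A_{kj}^\al$, and $C^h_\ind$ equals $C^S_\ind$, $C^i_\ind$ or $C^k_\ind$ accordingly.

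The key quantitative input is the scaling. Writing the event threshold as $t$ and the conditioning threshold as $ut$ as in Corollary~\ref{rvaggind}(a), I would use $\VaR_{1-\gamma\kappa}(F_i)\sim(C^i_\ind)^{1/\al}(\gamma\kappa)^{-1/\al}$ and $\VaR_{1-\gamma}(\|F\|)\sim(C^S_\ind)^{1/\al}\gamma^{-1/\al}$ (and their $F_k$ analogues) to read off the limiting ratio of the two thresholds. For \eqref{eq:VaR:FigivenFindep} this gives $u^\al=\kappa\,C^S_\ind/C^i_\ind$; substituting into $(C^S_\ind)^{-1}\sum_j K_j\E\min\{\|Ae_j\|^\al,u^\al A_{ij}^\al\}$ and absorbing the constants into the minimum yields exactly $\sum_j K_j\E\min\{\|Ae_j\|^\al/C^S_\ind,\kappa A_{ij}^\al/C^i_\ind\}$. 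The remaining two probabilities follow identically with $u^\al=\kappa\,C^i_\ind/C^S_\ind$ and $u^\al=\kappa\,C^k_\ind/C^i_\ind$, respectively.

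The one point that needs genuine care — and which I expect to be the main obstacle — is that Corollary~\ref{rvaggind}(a) is stated for thresholds in the exact ratio $t:ut$ with $u$ fixed, whereas here the two thresholds are Value‑at‑Risk quantities whose ratio only tends to $u$ as $\gamma\to0$. To bridge this I would invoke the joint regular variation of $(F_i,\|F\|)$, obtained by applying Proposition~\ref{th1} to the continuous $1$‑homogeneous map $x\mapsto(x_i,\|x\|)$, which yields $\pr{F_i>s,\|F\|>vs}/\pr{\|V\|>s}\to\E\mu\circ A^{-1}(\{x_i>1,\|x\|>v\})$ at every continuity point $v$. Since this limit is monotone and continuous in $v$, a standard converging‑together (sandwiching) argument — bounding the varying ratio $u_\gamma\to u$ between $u\pm\eps$ in both numerator and denominator — lets one replace $u_\gamma$ by the fixed value $u$, after which the explicit minimum formula of Corollary~\ref{rvaggind}(a) applies verbatim.

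For the three Conditional Tail Expectations I would proceed directly from Corollary~\ref{rvaggindES}(a), which gives $\E[g(F)\mid h(F)>t]\sim\frac{\al}{\al-1}(C^h_\ind)^{-1}\sum_j K_j\E[g(Ae_j)h^{\al-1}(Ae_j)]\,t$ after using the $1$‑homogeneity identity $g(AK^{1/\al}e_j)h^{\al-1}(AK^{1/\al}e_j)=K_j\,g(Ae_j)h^{\al-1}(Ae_j)$. For $\ICoTE$ take $g(x)=x_i$, $h(x)=\|x\|$, so $C^h_\ind=C^S_\ind$; for $\SCoTE$ take $g(x)=\|x\|$, $h(x)=x_i$, so $C^h_\ind=C^i_\ind$; for $\MCoTE$ take $g(x)=x_i$, $h(x)=x_k$, so $C^h_\ind=C^k_\ind$. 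Substituting the matching $\VaR$ asymptotics of Lemma~\ref{Cor:VaR}(a) for $t$ — for instance $t=\VaR_{1-\gamma}(\|F\|)\sim(C^S_\ind)^{1/\al}\gamma^{-1/\al}$ in the $\ICoTE$ case — and combining the powers via $(C^h_\ind)^{-1}(C^h_\ind)^{1/\al}=(C^h_\ind)^{1/\al-1}$ reproduces \eqref{ExpectedShortfallFi}--\eqref{ExpectedShortfallMutual}. Here the hypothesis $\al>1$ is exactly what guarantees finiteness of the integral underlying Corollary~\ref{rvaggindES}, so beyond this bookkeeping no further argument is required.
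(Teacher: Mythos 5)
Your proposal is correct and follows essentially the same route as the paper's own proof: both reduce each of the six claims to Corollary~\ref{rvaggind}(a) and Corollary~\ref{rvaggindES}(a) by choosing $g,h$ among the coordinate projections and the norm, computing the threshold ratio $u^{\al}$ (namely $\kappa C^S_\ind/C^i_\ind$, $\kappa C^i_\ind/C^S_\ind$, $\kappa C^k_\ind/C^i_\ind$) from the marginal $\VaR$ asymptotics of Lemma~\ref{Cor:VaR}, and then absorbing constants into the minimum, respectively substituting $t\sim (C^h_\ind)^{1/\al}\gamma^{-1/\al}$ into the CoTE asymptotics. The one difference is that you explicitly justify replacing the varying ratio $u(1+o(1))$ by the fixed $u$ through a monotonicity--continuity sandwiching argument, a point the paper leaves implicit (it writes $u^\al = \kappa\, C^h_\ind/C^g_\ind\,(1+o(1))$ and invokes Corollary~\ref{rvaggind} directly); this is a refinement of the same argument rather than a different approach.
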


\bproof We show the following, slightly more general result: Let 
$$g, h \in \{f:\R_+^q\rightarrow \R_+\,; \, f({{x}}) = \| {{x}}\|\mbox{ and } f_k: \R_+^q\rightarrow \R_+; f_k({{x}}) = x_k, \, k=1, \ldots, q\},$$
 then under the assumptions of this proposition, 
\beam\label{eq:VaR:fgind}
\lefteqn{\pr{g(F)>\VaR_{1-\ga \kappa}(g(F))\mid h(F)> \VaR_{1-\ga}(h(F))} }\quad\quad\quad\nonumber\\
& \to &
\sum_{j=1}^{d} \E \min\Big\{ \frac{h^\al(AK^{1/\al}e_j)}{C^h_{\ind}}, \frac{ \kappa g^\al(AK^{1/\al}e_j)}{C^g_{\ind}} \Big\}\label{eq:VaR:FgivenFiindep}, \ \gamma \to 0.
\eeam
To show this general result we  set $\VaR_{1-\ga \kappa }(g(F))=t$ and $\VaR_{1-\ga}(h(F))=ut$.
Now recall that by Lemma \ref{Cor:VaR}
$$
\VaR_{1-\gamma}(F_{i})\sim (C^{i}_{\ind})^{1/\alpha} \gamma^{-1/\alpha} 
\quad\mbox{ and  } \quad
\VaR_{1-\gamma}(\|F\|) \sim ( C^S_\ind)^{1/ \alpha}\gamma^{-1/\alpha},\ \gamma \rightarrow 0.
$$ 
This implies that
$$u = \frac{\VaR_{1-\ga}(h(F))}{\VaR_{1-\ga \kappa}(g(F))} = \frac{ (C_{\ind}^h)^{1/\al} \ga^{-1/\al}}{(C_{\ind}^g)^{1/\al} (\ga \kappa)^{-1/\al}} (1+o(1)),\ \gamma\to 0$$
such that
$$u^\al = \frac{ C_{\ind}^h }{C_{\ind}^g }\frac{\ga \kappa}{\ga } (1+o(1)) = \frac{ C_{\ind}^h }{C_{\ind}^g } \kappa (1+o(1)),\ \gamma \to 0.$$
We conclude that \eqref{eq:VaR:fgind}  holds  by Corollary~\ref{rvaggind}.

The analogous expressions for the  Conditional Tail Expectation follow immediately from {Corollary}~\ref{rvaggindES}.
\eproof

\brem
Here is  an interpretation of \eqref{eq:VaR:FigivenFindep} from the viewpoint of a regulator. Assume that  the market situation changes from its normal situation such that, for instance, $\|F\|> \delta\VaR_{1-\ga}(\|F\|)$  for some $\delta>1$, then by \eqref{tailindepgh} the factor $\kappa$ in the limit in \eqref{eq:VaR:FigivenFindep} becomes $\delta^\alpha\kappa$. 
For the sake of argument we call the first contribution of the sum on rhe right-hand side of  ~\eqref{eq:VaR:FigivenFindep}, $  \frac{\|Ae_j\|^\al}{C^S_\ind}$   the systemic constant and the second contribution, $ \kappa \frac{ A_{ij}^{\alpha}}{C^{i}_\ind}$,  the individual contribution.
Consider the situation, where the minimum has been attained by the individual contribution  under  previous market conditions. 
Then the change in the market, which resulted in the change of $\kappa$ to $\delta^\alpha\kappa$ for some $\delta>1$ can result in two situations. In the first one, the minimum is still assumed by the individual contribution, even though it is increased by the factor $\delta^\al$.
If the systemic change is so substantial that the individual contribution becomes larger than the systemic constant, then the limit on the right-hand side of \eqref{eq:VaR:FigivenFindep} becomes the systemic constant.

If a regulator implements the strategy that the limiting conditional probability remains the same under all market conditions, then it would firstly require that under normal market conditions
$$\E\min\Big\{  \frac{\|Ae_j\|^\al}{C^S_\ind}, \kappa \frac{ A_{ij}^{\alpha}}{C^{i}_\ind}\Big\} = \kappa \frac{ \E[A_{ij}^{\alpha}]}{C^{i}_\ind}.$$
If the market comes under stress, then the regulator would
raise the Value-at-Risk for the institution as long as the minimum is still taken by the individual contribution.
The situation, however, that the minimum is taken by the systemic constant indicates that the stressed market condition can no longer be absorbed by an adjustment of the individual capital reserves. Then political measures have to be taken.

The other limit relations have analogous interpretations.
\erem

For each of the limiting expressions in Proposition \ref{VaRcond} the limiting behaviour for $\kappa \rightarrow 0$ is linear, as is made precise in the next proposition. 

We assume that there exist constants $w, W > 0  $ such that  $0 < w \le W_{ij} \le W  $ and that there  exist constants $b, B$ such that $0 < b \le \|Ae_j\| \le B$.
For example, if $W_{ij} = \deg(j)^{-1} \bone (i \sim j)$  then we can take $w = \frac{1}{1 + d}$ and $W=1$. We set 
\beam \label{kappai} 
\kappa_0  =   \kappa_0(i)  = \frac{b^\al}{C^S_\ind} \,  \frac{ C^{i}_\ind}{W^{\alpha}} ,\quad
\kappa_1  =   \kappa_1(i) = \frac{ C^{S}_\ind} {C^i_\ind}  \, \frac{b^\alpha}{W^{\alpha}},
\quad\mbox{and}\quad
\kappa_2 = \kappa_2(i,k) = \frac{ C^i_\ind }{ C^k_\ind} \,\frac{w^\alpha}{W^\alpha }.
\eeam 
Moreover,  we define
\beam \label{taui}
\tau(i) =  \sum_{j=1}^{d} K_j \E  \bone (i \sim j) \frac{\|Ae_j\|^\al}{C^S_\ind}
\quad\mbox{ and }\quad
\tau(i,k) = \sum_{j=1}^{d} K_j \E  \bone (k \sim j) \frac{A_{ij}^\al}{C^i_\ind};
\eeam 
and note that $\tau(i) \le 1$ and $\tau(i,k) \le 1$ 
through the definitions of $C^S_\ind$ and $C^i_\ind$, respectively. 
If $i$ and $k$ do not share an object then $\tau(i,k)=0$. 

 \begin{theorem}\label{help_covar}
 Assume that the conditions for Proposition \ref{VaRcond} hold and that there exist finite constanst $w, W > 0  $ such that  $0 < w \le W_{ij} \le W$ and also finite constants $b, B$ such that $0 < b \le \|Ae_j\| \le B$. 
 \begin{itemize}
 \item[(a)] For $\kappa \le \kappa_0$, 
\beam \label{icovarbasis}
\sum_{j=1}^{d} K_j \E \min\Big\{  \frac{\|Ae_j\|^\al}{C^S_\ind}, \kappa \frac{ A_{ij}^{\alpha}}{C^{i}_\ind}\Big\} = \kappa \sum_{j=1}^{d} K_j \E \left\{ \frac{ A_{ij}^{\alpha}}{C^{i}_\ind} \right\} = \kappa. 
\eeam
\item[(b)] 
For $\kappa\le  \kappa_1(i) $, 
\beam \label{scovarbasis} 
\sum_{j=1}^{d} K_j \E \Big\{\min\Big\{\kappa \frac{\|Ae_j\|^\al}{C^S_\ind},   \frac{A_{ij}^{\alpha}}{C^{i}_\ind} \Big\} =  \kappa \tau(i).
\eeam
\item[(c)]  For $\kappa \le \kappa_2 (i,k) $, 
\beam \label{mcovarbasis} 
\sum_{j=1}^{d} K_j \E \min\Big\{ \kappa \frac{A_{ij}^\al}{C^i_\ind}, \frac{A_{kj}^{\alpha}}{C^k_\ind} \Big\} = 
\kappa \tau(i,k).
\eeam
 \end{itemize} 
 \end{theorem}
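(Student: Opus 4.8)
The plan is to prove all three identities by one and the same mechanism, which is worth isolating first. In each of (a)--(c) the summand is $K_j\,\E\min\{\,\cdot\,,\,\cdot\,\}$, and the claim is that once $\kappa$ drops below the stated threshold the pointwise minimum is, almost surely and for every $j$, attained by one \emph{fixed} branch. Once that deterministic branch selection is established, the random minimum can be replaced by that branch before taking expectations, the sum over $j$ becomes linear in $\kappa$, and it is identified through the defining formula of the relevant normalising constant in \eqref{VaRconst} or of $\tau$ in \eqref{taui}. So the whole proof reduces, for each part, to one elementary inequality checked uniformly in $j$ and in the randomness, plus a bookkeeping split according to the connectivity indicators.

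First I would carry out (a) as the template. Here the target branch is the individual one, $\kappa A_{ij}^\al/C^i_\ind$. On the event $\{i\sim j\}$ I use the uniform bounds $A_{ij}=W_{ij}\le W$ and $\|Ae_j\|\ge b$, so that $\kappa A_{ij}^\al/C^i_\ind \le \kappa W^\al/C^i_\ind \le b^\al/C^S_\ind \le \|Ae_j\|^\al/C^S_\ind$ exactly when $\kappa\le \kappa_0=\tfrac{b^\al}{C^S_\ind}\tfrac{C^i_\ind}{W^\al}$; on $\{i\not\sim j\}$ we have $A_{ij}=0$, so the individual branch is $0$ and is trivially the minimum. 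Hence termwise the minimum equals $\kappa A_{ij}^\al/C^i_\ind$, and summing gives $\tfrac{\kappa}{C^i_\ind}\sum_j K_j\,\E A_{ij}^\al=\kappa$ by the definition $C^i_\ind=\sum_j K_j\,\E A_{ij}^\al$. Parts (b) and (c) run identically but with the opposite branch, the connectivity indicator now doing the accounting. In (b) I show that on $\{i\sim j\}$ the systemic branch $\kappa\|Ae_j\|^\al/C^S_\ind$ is the smaller one, using $\|Ae_j\|\le B$ together with $A_{ij}=W_{ij}\ge w$; the resulting worst-case constraint on $\kappa$ is precisely the threshold $\kappa_1$. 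On $\{i\not\sim j\}$ the individual branch vanishes and supplies the minimum, so that $\min\{\cdot\}=\kappa\,\1(i\sim j)\|Ae_j\|^\al/C^S_\ind$ a.s., whose expected sum is $\kappa\,\tau(i)$ by \eqref{taui}. In (c) the system/individual pair is replaced by the individual-$i$/individual-$k$ pair: on $\{k\sim j\}$, from $A_{ij}\le W$ and $A_{kj}=W_{kj}\ge w$ one gets $\kappa A_{ij}^\al/C^i_\ind\le A_{kj}^\al/C^k_\ind$ exactly for $\kappa\le\kappa_2=\tfrac{C^i_\ind}{C^k_\ind}\tfrac{w^\al}{W^\al}$, while on $\{k\not\sim j\}$ the $k$-branch is $0$; summing yields $\kappa\,\tau(i,k)$.

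The main obstacle I anticipate is not any single computation but making the branch selection \emph{uniform}: one must guarantee that the chosen branch is the minimum simultaneously for every $j$ and for almost every realisation of the random matrix $A$, so that the minimum may be replaced by a deterministic branch \emph{before} integrating. This is exactly what the uniform envelopes $w\le W_{ij}\le W$ and $b\le\|Ae_j\|\le B$ deliver, and it explains why the thresholds $\kappa_0,\kappa_1,\kappa_2$ are built from worst-case ratios of these extremes. A secondary, routine point is the clean treatment of the disconnected events $\{i\not\sim j\}$ (resp. $\{k\not\sim j\}$): there one must verify that the vanishing branch really is the minimum and that it matches the corresponding summand of the linear target, which holds trivially since both sides are then $0$ and the indicator inside $\tau$ records precisely the connected contributions.
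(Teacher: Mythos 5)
Your mechanism --- fix the branch of each minimum deterministically, uniformly in $j$ and in the realisation of the network, via the envelopes $w\le W_{ij}\le W$ and $b\le\|Ae_j\|\le B$, then sum and identify the constant through \eqref{VaRconst} and \eqref{taui} --- is exactly the mechanism of the paper's proof, including the treatment of the disconnected events where the vanishing branch supplies the minimum. Parts (a) and (c) are correct as you wrote them and coincide with the paper's argument, thresholds $\kappa_0$ and $\kappa_2$ included.

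Part (b), however, contains a concrete gap. The bounds you invoke there, $\|Ae_j\|\le B$ and $W_{ij}\ge w$ on $\{i\sim j\}$, yield the sufficient condition $\kappa B^\al/C^S_\ind\le w^\al/C^i_\ind$, i.e.\ $\kappa\le \frac{C^S_\ind}{C^i_\ind}\,\frac{w^\al}{B^\al}$. This is \emph{not} the theorem's $\kappa_1(i)=\frac{C^S_\ind}{C^i_\ind}\,\frac{b^\al}{W^\al}$ from \eqref{kappai}, which is built from the opposite pair of envelopes; since in general $\frac{w^\al}{B^\al}<\frac{b^\al}{W^\al}$, your assertion that your worst-case constraint ``is precisely the threshold $\kappa_1$'' is false, and what your argument actually establishes is \eqref{scovarbasis} only on the (possibly strictly smaller) range $\kappa\le\frac{C^S_\ind}{C^i_\ind}\frac{w^\al}{B^\al}$. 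The discrepancy is not cosmetic: for $W_{ij}=\bone(i\sim j)/\deg(j)$ with the sum norm one can take $b=B=W=1$ and $w=1/q$, and for $\kappa\in\bigl(\frac{C^S_\ind}{C^i_\ind}q^{-\al},\,\frac{C^S_\ind}{C^i_\ind}\bigr]$ the event $\{\deg(j)=q\}$ makes the individual branch the strict minimum with positive probability, so \eqref{scovarbasis} fails even though $\kappa\le\kappa_1$. You should know, though, that the paper's own proof of (b) has the identical unjustified step: it derives the correct realisation-wise condition $\kappa\le\frac{W_{ij}^\al}{C^i_\ind}\,\frac{C^S_\ind}{\|Ae_j\|^\al}$ and then claims ``in particular'' that $\kappa\le\kappa_1$ suffices, which under the stated assumptions it does not (it would require $bB\le wW$). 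In other words, the threshold your bounds produce is the right one; the constant $\kappa_1$ in \eqref{kappai} is evidently an error (relative to $\kappa_0$ the ratio of the $C$'s was inverted, but the envelope pair $b/W$ was not swapped to $w/B$ as the exchange of the two branches demands), and a clean write-up of your part (b) should state the corrected threshold instead of identifying it with $\kappa_1$.
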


\bproof 
 To show \eqref{icovarbasis} we start with \eqref{eq:VaR:FigivenFindep}. Consider the expression 
$$ \min\Big\{  \frac{\|Ae_j\|^\al}{C^S_\ind}, \kappa \frac{ A_{ij}^{\alpha}}{C^{i}_\ind}\Big\} = \bone (i \sim j)  \min\Big\{  \frac{\|Ae_j\|^\al}{C^S_\ind}, \kappa \frac{ W_{ij}^{\alpha}}{C^{i}_\ind}\Big\}.
$$ 
If $i \not\sim j$ then the minimum is 0, and if $i \sim j$ then we can choose 
\beam \label{kappabound} \kappa < \frac{\|Ae_j\|^\al}{C^S_\ind} \,  \frac{ C^{i}_\ind}{ W_{ij}^{\alpha}}.\eeam
While this expression is random,  $\kappa_0$ is not random, and 
for $\kappa \le \kappa_0, $ 
\eqref{kappabound} is satisfied for any realisation of the network. Hence 
\beao
\E \min\Big\{  \frac{\|Ae_j\|^\al}{C^S_\ind}, \kappa \frac{ A_{ij}^{\alpha}}{C^{i}_\ind}\Big\} =  \kappa \E \Big\{ \bone (i \sim j)  \frac{ W_{ij}^{\alpha}}{C^{i}_\ind}\Big\} .
\eeao
Summing over $j=1,\ldots, d$ and recalling the definition of $C^{i}_\ind$ gives \eqref{icovarbasis}.
 
For \eqref{scovarbasis} we start with 
\eqref{eq:VaR:FgivenFiindep}; the argument is similarly straightforward. 
Consider the expression 
$$ 
\min\Big\{{\kappa} \frac{\|Ae_j\|^\al}{C^S_\ind},   \frac{A_{ij}^{\alpha}}{C^{i}_\ind} \Big\}
= \bone (i \sim j) \min\Big\{\kappa \frac{\|Ae_j\|^\al}{C^S_\ind},   \frac{W_{ij}^{\alpha}}{C^{i}_\ind} \Big\}.
$$
If 
$$\kappa  \le  \frac{W_{ij}^{\alpha}}{C^{i}_\ind} \,  \frac{C^S_\ind}{\|Ae_j\|^\al},$$ 
then 
$$ \bone (i \sim j) \min\Big\{\kappa \frac{\|Ae_j\|^\al}{C^S_\ind},   \frac{W_{ij}^{\alpha}}{C^{i}_\ind} \Big\} = \bone (i \sim j) \kappa \frac{\|Ae_j\|^\al}{C^S_\ind}.$$
In particular, this 	equation holds for $\kappa \le \kappa_1$  with $\kappa_1$ given in \eqref{kappai}. Again summing over all $j$ gives \eqref{scovarbasis}.

 To show \eqref{mcovarbasis} we use 
 \eqref{eq:VaR:FigivenFkindep}
Consider the expression 
$$ 
 \min\Big\{ \kappa \frac{A_{ij}^\al}{C^i_\ind}, \frac{A_{kj}^{\alpha}}{C^k_\ind} \Big\} = \bone (i \sim j) \bone (k \sim j) \min\Big\{ \kappa \frac{W_{ij}^\al}{C^i_\ind}, \frac{W_{kj}^{\alpha}}{C^k_\ind} \Big\}.
$$ 
For $\kappa \le \kappa(i,k), $  
$$ \bone (i \sim j) \bone (k \sim j) \min\Big\{ \kappa \frac{W_{ij}^\al}{C^i_\ind}, \frac{W_{kj}^{\alpha}}{C^k_\ind} \Big\} 
= \bone (k \sim j) \kappa \frac{W_{ij}^\al}{C^i_\ind}
=  \kappa \bone (k \sim j) \frac{A_{ij}^\al}{C^i_\ind}
$$ 
for $\alpha > 0$. Summing over $j$ gives the assertion \eqref{mcovarbasis}. 
\eproof 

Following on from \eqref{icovarbasis}, \eqref{scovarbasis} and \eqref{mcovarbasis} we can now assess the limiting behaviour of ICoVaR, SCoVaR and MCoVaR from Definition \ref{def:systemicriskmeasures}, specified for the aggregation function $h(F) = \| F\| $, where   $F=(F_1,\ldots,F_q)$ is the  random exposure vector.
For $\ga_i,\ga\in (0,1)$ referring to agent $i$ and the market, respectively, we consider the following conditional systemic risk measures: \\[2mm]
(a) \,  {\em Individual Conditional Value-at-Risk}\\
\centerline{$\ICoVaR_{1-\gamma_i, \gamma}(F_i \mid  \| F\|):=\inf\{ t \geq 0: \pr{ F_i>t \mid  \| F\|> \VaR_{1-\gamma}( \| F\|) } \leq \gamma_i \},$}\\
(b) \,  {\em Systemic Conditional Value-at-Risk} \\
\centerline{$\SCoVaR_{1-\gamma,\gamma_i}( \| F\|\mid { F_i} ):=\inf\{ t \geq 0: \pr{  \| F\|>t \mid F_{i} > \VaR_{1-\gamma}(F_i)  } \leq \gamma \},$}\\
(c) \,  {\em Mutual Conditional Value-at-Risk}\\
\centerline{$\MCoVaR_{1-\gamma_i, \gamma_k}(F_i \mid F_k):=\inf\{ t \geq 0: \pr{ F_i>t \mid F_{k} > \VaR_{1-\gamma_k}(F_k)} \leq \gamma_i \}$.}

\begin{theorem} \label{asymprisk} 
Assume that there exist constants $w, W > 0  $ such that  $0 < w \le W_{ij} \le W  $ and that there is an upper bound $a<\infty$ such that $\|Ae_j\| \le a$.
Recall the constants from \eqref{kappai} and \eqref{taui}
\begin{enumerate}
\item As $\gamma \rightarrow 0,$ for $\gamma_i \le \kappa_0 (i) $, 
\beam \label{icovarasymp} \ICoVaR_{1-\gamma_i, \gamma}(F_i \mid  \| F\|) \sim \VaR_{1 - \gamma_i \gamma}(F_i) \sim (C^i_\ind)^\frac{1}{\alpha} (\gamma_i \gamma) ^{-\frac{1}{\alpha}}; 
\eeam 
\item  As $\gamma_i \rightarrow 0, $ for 
$\gamma \le \kappa_1(i) \tau(i)$, 
\beam  \label{scovarasymp}  \SCoVaR_{1-\gamma,\gamma_i}( \| F\|\mid { F_i} ) \sim VaR_{1 - \frac{\ga_i \ga }{\tau(i) }} ( \| F\|) \sim (C^S_\ind)^\frac{1}{\alpha} \left\{ \frac{\ga_i \ga }{\tau(i) }\right\}  ^{-\frac{1}{\alpha}};
\eeam 
\item If $\tau (i,k) \ne 0$, then as $\gamma_k \rightarrow 0,$  for $\gamma_i \le \kappa_2(i,k) \tau(i,k)$, we have 
\beam   \label{mcovarasymp} \MCoVaR_{1-\gamma_i, \gamma_k}(F_i \mid F_k) \sim VaR_{1 - \frac{\gamma_i \gamma_k }{\tau(i,k)} } ( F_i) \sim  (C^i_\ind)^\frac{1}{\alpha} \left\{ \frac{\gamma_i \gamma_k }{\tau(i,k)}\right\}  ^{-\frac{1}{\alpha}};
\eeam 
and if $\tau (i,k) =0$ then, as $\gamma_i \rightarrow 0$,  
$$ \MCoVaR_{1-\gamma_i, \gamma_k}(F_i \mid F_k) \sim VaR_{1 - \gamma_i} (F_i) \sim  (C^i_\ind)^\frac{1}{\alpha} \gamma_i^{-\frac{1}{\alpha}}. $$ 
\end{enumerate} 
\end{theorem}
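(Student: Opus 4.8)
The plan is to treat the three Value-at-Risk based measures by a single device: each is, by Definition~\ref{def:systemicriskmeasures}, the generalized inverse (an infimum over thresholds $t$) of a \emph{conditional} tail probability, and each such conditional probability is exactly the object whose limit is computed in Proposition~\ref{VaRcond}. The idea is to parametrize the free threshold by writing it as a Value-at-Risk of the ``event'' variable at a confidence level equal to $\kappa$ times the level appearing in the conditioning event, and then to pass to the limit. Concretely, for the ICoVaR I would set $t=\VaR_{1-\kappa\gamma}(F_i)$ and send $\gamma\to0$ with $\kappa$ fixed; Proposition~\ref{VaRcond} then yields that $\pr{F_i>t\mid\|F\|>\VaR_{1-\gamma}(\|F\|)}$ converges to the sum-of-minima in the first display of that proposition. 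The same substitution with $t=\VaR_{1-\kappa\gamma}(\|F\|)$ handles the SCoVaR (conditioning variable $F_i$, limit in its level $\gamma\to0$), and $t=\VaR_{1-\kappa\gamma_k}(F_i)$ handles the MCoVaR (conditioning variable $F_k$, limit $\gamma_k\to0$).

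First I would invoke the boundedness hypotheses $0<w\le W_{ij}\le W$ and $\|Ae_j\|\le a$ together with the smallness of the target levels to linearize these limits: for $\kappa$ below the thresholds in \eqref{kappai}, Theorem~\ref{help_covar} (equations \eqref{icovarbasis}, \eqref{scovarbasis}, \eqref{mcovarbasis}) collapses the sum-of-minima to the linear expressions $\kappa$, $\kappa\tau(i)$ and $\kappa\tau(i,k)$, respectively. This identifies, for each measure, the value of $\kappa$ that makes the limiting conditional probability equal the prescribed conditional confidence: $\kappa=\gamma_i$ for the ICoVaR, $\kappa=\gamma_i/\tau(i)$ for the SCoVaR, and $\kappa=\gamma_i/\tau(i,k)$ for the MCoVaR. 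The constraints $\gamma_i\le\kappa_0(i)$, $\gamma_i\le\kappa_1(i)\tau(i)$ and $\gamma_i\le\kappa_2(i,k)\tau(i,k)$ are precisely what guarantee that these solving values of $\kappa$ lie in the linear regime. Substituting the solved $\kappa$ back into the parametrized threshold and applying the single-variable asymptotics \eqref{uniVaRasym}--\eqref{marketVaRasym} of Lemma~\ref{Cor:VaR} then produces the claimed equivalents, e.g. $\VaR_{1-\gamma_i\gamma}(F_i)\sim(C^i_\ind)^{1/\al}(\gamma_i\gamma)^{-1/\al}$ in \eqref{icovarasymp}, and analogously for \eqref{scovarasymp} and \eqref{mcovarasymp}.

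The step I expect to be the crux is justifying the interchange of the limit with the inversion that defines each risk measure: Proposition~\ref{VaRcond} delivers convergence of the conditional probability for each \emph{fixed} $\kappa$, whereas the risk measure is a quantile functional of the whole conditional law. I would make this rigorous by a sandwiching argument resting on two monotonicities. For a fixed conditioning level the conditional tail $t\mapsto\pr{\cdot>t\mid\cdot}$ is non-increasing, while the limiting expression is strictly increasing in $\kappa$ on the relevant linear segment (slopes $1$, $\tau(i)$, $\tau(i,k)$, all positive when $\tau\neq0$). Evaluating at the perturbed levels $\kappa(1\pm\eps)$ and using Proposition~\ref{VaRcond}, the limit of the conditional probability at $\kappa(1-\eps)$ falls strictly below, and at $\kappa(1+\eps)$ strictly above, the target confidence; combined with monotonicity in $t$ this traps the risk measure between the two Value-at-Risks at the levels corresponding to $\kappa(1\pm\eps)$. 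Since those bounds differ only by the factor $(1\pm\eps)^{-1/\al}\to1$, letting $\eps\downarrow0$ pins the asymptotic equivalent. Only the value of the limit at the target $\kappa$ and its strict monotonicity are used, so the argument is insensitive to any lack of continuity of the underlying laws.

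Finally, the degenerate case $\tau(i,k)=0$ of the MCoVaR must be handled separately, since then the linear slope vanishes and the inversion above is void. Here $\tau(i,k)=0$ forces $\bone(k\sim j)A_{ij}=0$ for every $j$, i.e. agents $i$ and $k$ hold no object in common, so that $F_i$ and $F_k$ are asymptotically independent and the sum-of-minima in Proposition~\ref{VaRcond} vanishes for every fixed $\kappa$. Consequently the conditional law of $F_i$ given the extreme event $\{F_k>\VaR_{1-\gamma_k}(F_k)\}$ converges, as $\gamma_k\to0$, to the unconditional law of $F_i$, and the $\gamma_i$-quantile therefore converges to the unconditional one, yielding $\MCoVaR_{1-\gamma_i,\gamma_k}(F_i\mid F_k)\sim\VaR_{1-\gamma_i}(F_i)\sim(C^i_\ind)^{1/\al}\gamma_i^{-1/\al}$, the remaining assertion.
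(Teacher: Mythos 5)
Your proposal is correct and takes essentially the same route as the paper's proof: substitute the threshold $\VaR$ at level $\kappa$ times the conditioning level into Proposition~\ref{VaRcond}, linearize the limiting sum-of-minima via Theorem~\ref{help_covar}, solve for $\kappa$ ($\kappa=\gamma_i$, $\kappa=\gamma/\tau(i)$, $\kappa=\gamma_i/\tau(i,k)$ up to your harmless relabelling of $\gamma$ and $\gamma_i$ in the SCoVaR case), and conclude with Lemma~\ref{Cor:VaR}, handling the degenerate case $\tau(i,k)=0$ by the same no-shared-object independence observation. Your $\eps$-sandwich argument for passing from convergence of the conditional probability at fixed $\kappa$ to the asymptotics of the quantile itself is a welcome added rigor, making explicit a monotonicity step that the paper's proof compresses into a single ``thus''.
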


\begin{proof}
First, from \eqref{eq:VaR:FigivenFindep} and \eqref{icovarbasis}, for $\kappa \le \kappa_0 = \kappa_0(i) $, as $\gamma \rightarrow 0$, 
\beao
\pr{F_i>\VaR_{1-\ga \kappa}(F_i)\mid \|F\|> \VaR_{1-\ga}(\|F\|)}
&\to & 
\sum_{j=1}^{d} K_j  \E \min\Big\{  \frac{\|Ae_j\|^\al}{C^S_\ind}, \kappa \frac{A_{ij}^{\alpha}}{C^{i}_\ind}\Big\} \, = \, \kappa. 
\eeao 
Hence for $\gamma_i \le \kappa_0$, $\gamma \rightarrow 0$,
$$\pr{ F_i>\VaR_{1-\ga \gamma_i}(F_i) \mid  \| F\|> \VaR_{1-\gamma}( \| F\|) } \sim \gamma_i.
$$ 
Thus $ \ICoVaR_{1-\gamma_i, \gamma}(F_i \mid  \| F\|) \sim \VaR_{1 - \gamma_i \gamma}(F_i)$. The asymptotics for the VaR follow from Lemma~\ref{Cor:VaR}, yielding \eqref{icovarasymp}. 

For \eqref{scovarasymp}, \eqref{eq:VaR:FgivenFiindep} and \eqref{scovarbasis} give that for $\gamma \rightarrow 0$ and $\kappa > \kappa_1= \kappa_1(i) $, 
\beao \pr{\|F\|> \,\VaR_{1- \kappa \ga}(\|F\|)\mid F_i>\VaR_{1-\ga }(F_i)}
& \to &
\sum_{j=1}^{d} K_j \E \min\Big\{ \kappa \frac{\|Ae_j\|^\al}{C^S_\ind},   \frac{A_{ij}^{\alpha}}{C^{i}_\ind} \Big\} \, = \,  \kappa \tau(i).
\eeao 
In particular, a simple rescaling gives  
\beao
 \pr{\|F\|> \,\VaR_{1- \kappa {\ga_i}} (\|F\|)\mid F_i>\VaR_{1-\ga_i }(F_i)}
& \to &
\kappa \tau(i),\ \gamma_i \rightarrow 0.
\eeao 
Letting $\gamma = \kappa \tau(i)$ gives that for $\gamma \le \kappa_1 \tau(i)$
\beao
 \pr{\|F\|> \,\VaR_{1- \frac{\ga_i \ga }{\tau(i) } }(\|F\|)\mid F_i>\VaR_{1-\ga_i }(F_i)}
& \to & 
\gamma,\  \gamma_i \rightarrow 0 .
\eeao 
Now \eqref{scovarasymp} follows as before. 
For  \eqref{mcovarasymp}, \eqref{eq:VaR:FigivenFkindep} and \eqref{mcovarbasis} give that for $\gamma \rightarrow 0$ and $\kappa \le \kappa_2(i,k)$,
\beao
\pr{F_i> \,\VaR_{1-\ga \kappa}(F_i)\mid F_k>\VaR_{1-\ga }(F_k)} & \to &
\sum_{j=1}^{d} K_j \E \min\Big\{ \kappa \frac{A_{ij}^\al}{C^i_\ind}, \frac{A_{kj}^{\alpha}}{C^k_\ind} \Big\} \, = \, \kappa \tau(i,k).
\eeao 
Changing variables gives that
\beao
\pr{F_i> \,\VaR_{1-\ga_k \kappa}(F_i)\mid F_k>\VaR_{1-\ga_k }(F_k)} & \to &
\kappa \tau(i,k),\  \gamma_k \rightarrow 0.
\eeao 
Setting $\gamma = \kappa \tau(i,k)$ and requiring that $\gamma \le \kappa_2(i,k) \tau(i,k)$ gives \eqref{mcovarasymp} when $\tau(i,k) \ne 0$. 

The last assertion follows from the fact that $F_i$ and $F_k$ are independent if they do not share an object. 
\end{proof} 

\brem
The asymptotic behaviour of the risk measures is assessed in Theorem~\ref{asymprisk} through the exceedance probabilities conditioned on an extreme event. For example, in \eqref{mcovarasymp}, agent $k$ has already incurred a very large loss. This loss will have an effect on the loss of agent $i$ if they share some objects in their portfolios. The more objects they share, the larger $\tau(i,k)$ would be. The unconditional \VaR\ threshold $1 - \gamma_i$ at which $\pr {F_i > t} = \gamma_i$ has to be adjusted to  $1 - \gamma \frac{\gamma_k}{\tau(i,k)}$ if $\tau(i,k) \ne 0$. The larger $\tau(i,k)$, the larger $1 - \gamma \frac{\gamma_k}{\tau(i,k)}$, hence the more stringent the requirements on agent $i$. 

The effect of the network on the agent in \eqref{scovarasymp} indicates the dependence on $\tau(i)$, which increases with the number of connections of agent $i$. Again, the larger $\tau(i)$, the more stringent the requirements on agent $i$ should be. 

Even in \eqref{icovarasymp} there is dependence of the network structure, which is reflected in $\kappa_0(i)$ as well as in $C_{\ind}^i$. 
\erem

\section{Approximation and illustration of network effects}\label{s4}

Throughout this section we restrict ourselves to the situation  that the losses $V_1,\ldots,V_d$ are asymptotically independent.

\subsection{Losses which are not covered}\label{s41}

In the bipartite graph model, depending on the random mechanism of the agents to choose various objects, it can happen that certain objects are not chosen by any of the agents.
In the (re)insurance context of Example~\ref{largeclaims} this means that certain large losses may be not insured.
This happens for instance for certain natural catastrophes like earthquakes, where the state or the international community may be liable, see \cite{sigma,uninsured} for further information and concrete numbers.
In this subsection we approximate the probability of large losses not covered and we define
$$N = \sum_{j=1}^d \bone (\deg(j)=0) $$
as the (random) number of non-covered losses. 

\begin{proposition}\label{tax}
Assume that the  asymptotically independent objects $V_1,\ldots, V_d$ have Pareto tails given in \eqref{pareto}.
Then
\beao
\P\Big(\sum_{j=1}^d \bone (\deg(j)=0)\,  V_j>t\Big)
 \sim  t^{-\al} \,  \sum_{l=1}^d K_{l}  \,   \P(\deg(l)=0) .
\eeao
\end{proposition}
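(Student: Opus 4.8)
The plan is to recognize the random weighted sum as an aggregated exposure of the form $h(\tilde A V)$ and then invoke Proposition~\ref{singleFasym} directly. Concretely, write $B_j := \bone(\deg(j)=0)$ and let $\tilde A := \mathrm{diag}(B_1,\dots,B_d)$ be the random $d\times d$ diagonal matrix carrying these indicators. Taking $h(x):=\sum_{i=1}^d x_i$, which is the $\ell^1$-norm on $\R_+^d$ and hence a continuous, $1$-homogeneous map $\overline{\R}_+^d\setminus\{\bnull\}\to(0,\infty]$, we have $h(\tilde A V)=\sum_{j=1}^d B_j V_j$, so the quantity whose tail we want is exactly $h(\tilde A V)$.

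First I would check that $\tilde A$ fits the hypotheses of Propositions~\ref{th1} and~\ref{singleFasym} when it plays the role of $A$. The entries $B_j$ are measurable functions of the random graph, which is independent of $V$ by assumption; $\tilde A$ is nonnegative; and since $\|\tilde A\|\le 1$ the Breiman integrability condition $\E\|\tilde A\|^{\al+\de}<\infty$ holds trivially for any $\de>0$. Under the standing assumption that $V_1,\dots,V_d$ are asymptotically independent with Pareto tails, $V$ is multivariate regularly varying with intensity measure concentrated on the coordinate axes, so the situation of Proposition~\ref{th1} is in force. Since the proofs of these results use $A$ only through measurability, nonnegativity, independence from $V$, and the moment bound, they apply verbatim to the diagonal $\tilde A$.

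Then Proposition~\ref{singleFasym} (asymptotically independent case) yields $\P(h(\tilde A V)>t)\sim C^h_{\ind}\,t^{-\al}$ as $\tto$, with $C^h_{\ind}=\sum_{j=1}^d K_j\,\E\,h^\al(\tilde A e_j)$. The remaining step is the computation of $h^\al(\tilde A e_j)$: the $j$-th column of $\tilde A$ is $B_j e_j$, so $1$-homogeneity gives $h(\tilde A e_j)=B_j h(e_j)=B_j$, and because $B_j\in\{0,1\}$ we have $h^\al(\tilde A e_j)=B_j=\bone(\deg(j)=0)$. Hence $C^h_{\ind}=\sum_{j=1}^d K_j\,\E\,\bone(\deg(j)=0)=\sum_{j=1}^d K_j\,\P(\deg(j)=0)$, which is precisely the claimed constant after relabelling the summation index as $l$.

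I expect the only real subtlety to be the justification that the random $0/1$ diagonal matrix legitimately plays the role of the weighted adjacency matrix $A$, since it does not arise from the agent--object covering interpretation; this is handled by the observation about the proofs above. If one prefers to avoid this reinterpretation, an equivalent elementary route is to condition on the graph: for each of the finitely many realizations $b\in\{0,1\}^d$ of $(B_1,\dots,B_d)$, the sum $\sum_{j:\,b_j=1}V_j$ is a finite sum of asymptotically independent Pareto variables whose tail is $\sim t^{-\al}\sum_{j:\,b_j=1}K_j$ by the one-big-jump principle (itself a special case of Proposition~\ref{singleFasym}); averaging these finitely many equivalences over $b$ and interchanging the finite sum with the limit reproduces $t^{-\al}\sum_{j=1}^d K_j\,\P(\deg(j)=0)$ via the identity $\sum_{b} \P(B=b)\,b_j=\E B_j$.
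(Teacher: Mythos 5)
Your main argument is correct, but it follows a genuinely different route from the paper's. The paper proceeds by direct conditioning: it splits according to the number $N=w$ of uncovered objects and the identity of the uncovered set $\calw$, applies the one-big-jump asymptotics $\P\big(\sum_{l\in\calw}V_l>t\big)\sim t^{-\al}\sum_{l\in\calw}K_l$ (Lemma~2.3 of \cite{KKR1}) for each of the finitely many sets, and recombines via the law of total probability to get $\sum_l K_l\,\P(\deg(l)=0)$ --- this is exactly the ``elementary route'' you sketch at the end, only parametrized by $(N,\calw)$ rather than by the indicator vector $b$. Your primary proof instead realizes $\sum_j\bone(\deg(j)=0)V_j$ as an aggregated exposure $h(\tilde AV)$ for the diagonal indicator matrix $\tilde A$ and the $\ell^1$-norm $h$, and quotes Proposition~\ref{singleFasym}; the constant $C^h_{\ind}=\sum_j K_j\,\E\,h^\al(\tilde Ae_j)=\sum_j K_j\,\P(\deg(j)=0)$ then falls out in one line. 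What your route buys: brevity, no combinatorial bookkeeping, and the interchange of the $t\to\infty$ limit with the expectation over the random graph is absorbed into the Breiman-type result underlying Propositions~\ref{th1} and~\ref{singleFasym}; you also correctly identify and discharge the one genuine obstacle, namely that $\tilde A$ plays the role of $A$ only formally, which is legitimate since those propositions use nothing about $A$ beyond nonnegativity, measurability, independence of $V$, and the moment bound, all trivially satisfied here. What the paper's route buys: it is self-contained modulo one univariate lemma, and it remains literally valid in the degenerate case $\P(\deg(j)=0)=0$ for all $j$, where both sides vanish identically; in that case $\tilde AV$ fails to be regularly varying in the sense of Definition~\ref{RV} (the limit measure is null), so Proposition~\ref{singleFasym} cannot be invoked verbatim and your argument should set this trivial case aside explicitly. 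Finally, both proofs rely on the standing assumption that the graph is independent of $V$ --- you state it explicitly, while the paper uses it silently when factoring the conditional probability in its first display.
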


\bproof
 We condition on all possible sets $\calw$ of non-covered losses and calculate
 \beao
\lefteqn{\P\Big(\sum_{j=1}^d \bone (\deg(j)=0) V_j>t\Big)
= \sum_{w=1}^d \P(N=w) \P\Big(\sum_{j=1}^d \bone (\deg(j)=0) V_j>t\mid N=w\Big)}\\
&=&    \sum_{w=1}^d  \P(N=w)  \sum_{\calw: |\calw|=w} \P\Big(\sum_{l\in\calw} V_{l} >t\Big)  \P( \deg(j)=0, j \in \calw \mid N=w)
\\
& \sim &   t^{-\al} \, \sum_{w=1}^d   P(N=w) 
\sum_{\calw: |\calw|=w}   \, \sum_{l\in\calw} K_{l}  \P(  \deg(j)=0, j \in \calw \mid N=w) 
\eeao
as $\tto$, 
using that for the asymptotically independent regime (cf. Lemma~2.3 of \cite{KKR1},
$$  \P\Big(\sum_{l\in\calw} V_{l} >t\Big) \sim t^{- \alpha } \sum_{l\in\calw} K_{l},\quad\tto.$$
Hence, interchanging the order of summation and using  the law of total probability,
\beao
\lefteqn{\P\Big(\sum_{j=1}^d \bone (\deg(j)=0) V_j>t\Big)} \\
&\sim &  t^{-\al}  \, \sum_{w=1}^d   \P(N=w) \, \sum_{l=1}^d K_{l} 
\sum_{\calw: |\calw|=w}  \bone (l\in\calw) \,  \P( \deg(j)=0, j \in \calw \mid N=w)  \\
&=&  t^{-\al} \,  \sum_{l=1}^d K_{l}  \,   \P(\deg(l)=0).
\eeao
\eproof

\bexam
Assume that, given the number of non-insured losses is $N=w$ for $0\le w\le d$, 
all sets of these $w$ claims have the same probability to be not covered.
Then Proposition~\ref{tax} takes on a particularly simple form. In this setting, \mbox{$\P( \deg(l) = 0\mid N=w) = {w}/{d}$} for every $l\in\{1,\ldots,d\}$,
and
\beao
\P( \deg(l) = 0) &=& \sum_{w=1}^d  \P(N=w) P( \deg(l) = 0\mid N=w) 
\, = \, \frac1{d} \sum_{w=1}^d w  \,P(N=w)
\, = \, \frac1{d} \,\E N.
\eeao
Hence
$$
\P\Big(\sum_{j=1}^d \bone (\deg(j)=0) V_j>t\Big)
 \sim  t^{-\al}  \E N \frac{1}{d} \, \sum_{l=1}^d K_{l}.
 $$
Moreover, if all edges are independent and have same probability $p\in [0,1]$ to be present, then
$
\E N = \sum_{w=1}^d \P( \deg(l) = 0)
\, = \, d (1-p)^{q}.
$
In this case, the probability of large non-insured losses can be approximated by
\beao
\P\Big(\sum_{j=1}^d \bone (\deg(j)=0) V_j>t\Big) \sim t^{-\al} (1-p)^{q}\sum_{l=1}^d K_{l},\quad t\to\infty.
\eeao
\eexam

\subsection{Independent bipartite graph model: conditional systemic  risk measures}\label{s42}

{In this section we exemplify our results based on a bipartite network model, where all  edges are independent and the weighted adjacency matrix $A$ is as in Example~\ref{largeclaims},} referring to the situation of a large claims insurance market.
Hence, $A_{ij} = \frac{\bone (i\sim j)}{\deg(j)}$, where $\{\bone (i\sim j), 1\le i\le d, 1\le j\le q\}$ are independent Bernoulli random variables with $\E\bone (i\sim j)=p_{ij}$.

\subsubsection{
Poisson approximations}

If $d$ and $q$ are large, we can provide Poisson approximations for the quantities $C^S_\ind$, $C^i_\ind$, $\E A_{ij}\|Ae_j\|^{\al-1}$, $\E A_{ij}^{\alpha-1}\|Ae_j\|$, and $ \E A_{kj}^{\alpha-1} A_{ij}$ which appear in Proposition~\ref{VaRcond}. 
We define by $X\sim\Pois(\la)$ a Poisson-distributed random variable $X$  with mean $\la>0$. We shall use the following Poisson variables;
\begin{eqnarray*}
X_j^{i,k} \sim {\rm Pois}(\la_j^{i,k}) & \mbox{ with } &  \la_j^{i,k} = \sum_{l=1, l\neq i,k}^q p_{li}, \\
X_j^i\sim {\rm Pois}(\la_j^i ) & \mbox{ with } &  \la_j^i = \sum_{l=1, l\neq i}^q p_{li}, \mbox{ and}\\
X_j\sim {\rm Pois}(\la_j)& \mbox{ with } &  \la_j=\sum_{k=1}^q p_{kj}.
\end{eqnarray*} 
 Proposition 4.1 from \cite{KKR1}  gives that 
\begin{gather} \label{bij}
\left| \E A_{ij}^{\alpha} - p_{ij}\E  (1 + X_j^i)^{-\alpha}  \right|
\leq p_{ij} \min\{ 1, (\lambda_{j}^{i})^{-1} \} \sum_{k=1, \ldots, q; k \ne i} p_{kj}^{2} =: B(i,j),
\end{gather}
and, {for the $r$-norm for some $r\ge 1$,} 
\begin{gather} \label{bj}
 \left|\E \|Ae_{j}\|^{\alpha} -\E \big[ \1 \{X_j \ge 1\} (1 + X_j)^{\alpha(1/r - 1 )} \big] \right| \leq \min\{ 1, (\lambda_{j})^{-1} \} \sum_{k=1}^{q} p_{kj}^{2} =: B(j).
\end{gather}
We shall also employ 
\begin{gather} \label{bjk}
 B(i,j,k):= \min\{ 1, (\lambda_{j}^{i,k})^{-1} \} \sum_{\ell=1, \ldots, q; \ell \ne i} p_{\ell j}^{2}.
\end{gather}

The following lemma is an immediate consequence of \eqref{VaRconst}, \eqref{bij} and \eqref{bj}.

\ble \label{boundlemma} 
With the notation as above
\beam\label{bound1}
  \Big| C_{\ind}^i - \sum_{j=1}^d K_j p_{ij}\E  (1 + X_j^i)^{-\alpha} \Big|
 & \le & \sum_{j=1}^d K_j {B(i,j)}.  \\ 
 \Big| C_{\ind}^S - \sum_{j=1}^d K_j \E \big[ \1 \{X_j \ge 1\} (1 + X_j)^{-\alpha \frac{r-1}{r}} \big] \Big|
 & \le & \sum_{j=1}^d K_j B(j). \label{bound2}
 \eeam
 \ele
 
Similarly as in Lemma~\ref{boundlemma} we can derive Poisson approximations for the limiting quantities from Proposition~\ref{VaRcond}, as follows. 

\bpr \label{ex41prop} 
Assume that $A_{ij} = \frac{\bone (i\sim j)}{\deg(j)}$, where $\{\bone (i\sim j), 1\le i\le d, 1\le j\le q\}$ are independent Bernoulli random variables with $\E\bone (i\sim j)=p_{ij}$.
Define
\beam \label{M}
M_1 =  \min\left\{ \frac{\kappa }{C^{i}_\ind} ,   \frac{1}{C^S_\ind} \right\},\quad
M_2 =  \min\left\{ \frac{1}{C^{i}_\ind} ,   \frac{\kappa}{ C^S_\ind} \right\} ,\quad\mbox{and}\quad
M_3 = \min\left\{ \frac{\kappa}{C^{i}_{\ind}}, \frac{1}{C_{\ind}^{k}}     \right\} .
\eeam
Then for $r \ge 1$, for the limiting expressions of the Conditional Value-at-Risk measures, 
\beam
\Big| \E \min\Big\{  \frac{\|Ae_j\|^\al}{C^S_\ind}, \kappa \frac{ A_{ij}^{\alpha}}{C^{i}_\ind}\Big\} - p_{i,j}  \E \min\Big\{ \frac{(1 + X_j^i)^{-\alpha + \frac{\alpha}{r} }}{C^S_\ind}, \kappa \frac{  {(1 + X_j^i)^{- \alpha }}}{C^{i}_\ind}\Big\}  \Big| & \le& M_1 B(i,j) , \label{bound6}  \\
\Big| \E \min\Big\{\kappa \frac{\|Ae_j\|^\al}{C^S_\ind},\frac{ A_{ij}^{\alpha}}{C^{i}_\ind}\Big\} - p_{i,j}  \E \min\Big\{\kappa \frac{(1 + X_j^i)^{-\alpha + \frac{\alpha}{r} }}{C^S_\ind},  \frac{  {(1 + X_j^i)^{- \alpha }}}{C^{i}_\ind}\Big\} \Big| & \le& M_2  B(i,j) , \label{bound7} 
\eeam
and for $i \ne k$,
\beam
\Big| \E \min\Big\{  \kappa \frac{A_{ij}^\al}{C^i_\ind}, \frac{A_{kj}^{\alpha}}{C^k_\ind} \Big\} -   {p_{ij} p_{kj}}{M_3} \E \big[ ( 2 + X_j^{i,k} )^{-\alpha} \big] \Big|
 & \le& {p_{ij} p_{kj}}M_3 B(i,j,k)  \label{bound5} , 
\eeam
 with $B(i,j)$ given in \eqref{bij}, $B(j)$ given in \eqref{bj}, and $B(i,j,k)$ given in \eqref{bjk}.  \\
Moreover, for the limiting expressions of the Conditional Tail Expectations,  if $\alpha > 1$, then 
\beam
\Big| \E A_{ij}\|Ae_j\|^{\al-1} -  p_{ij}\E \big[ \big( 1 + X_j^i \big)^{-\frac{\alpha(r-1)+1}{r}}\big] \Big|
 & \le & B(i,j), \label{bound3}\\
 \Big| \E A_{ij}^{\alpha-1}\|Ae_j\| -  p_{ij}\E \big[\big( 1 + X_j^i\big)^{\frac{1}{r} - \alpha}\big] \Big|
 & \le & B(i,j), \label{bound4}
\eeam
and for $i \ne k$,
\beam \label{lastbound}
\left|\E  A_{kj}^{\alpha-1} A_{ij}-  p_{ij}p_{kj}\E \big[ \big( 2 + X_j^{i,k} \big)^{ \alpha}\big] \right|
 & \le &  p_{ij} p_{kj}\min\{ 1, (\lambda_{j}^{i,k})^{-1} \} \sum_{\ell=1, \ldots, q; \ell \ne i} p_{\ell j}^{2} .
\eeam
\epr

\bproof
We compute for the constants in the conditional probabilities
\beam \label{normkey}
\| A e_j\|^{\alpha }
&=&   \Big(\sum _{k=1}^q \frac{\bone (k\sim j)}{\deg(j)^r} \Big)^{\frac{\al}{r}} =
 \Big(\frac1{\deg(j)^{r-1}} \Big)^{\frac{\al}{r}} \bone(\deg(j) > 0) .
\eeam
With \eqref{normkey},
\beao
 \min\left\{  \frac{\|Ae_j\|^\al}{C^S_\ind}, \kappa \frac{ A_{ij}^{\alpha}}{C^{i}_\ind}\right\} 
 &=&  \bone (i\sim j) \min\left\{  \frac{\deg(j)^{-\alpha + \frac{\alpha}{r} }}{C^S_\ind}, \kappa \frac{  {\deg(j)^{- \alpha }}}{C^{i}_\ind}\right\}  \\
 &=& \bone (i\sim j) \min\left\{  \frac{(1 + \sum_{k \ne i} \bone (k \sim j) )^{-\alpha + \frac{\alpha}{r} }}{C^S_\ind}, \kappa \frac{  { (1 + \sum_{k \ne i} \bone (k \sim j) )^{- \alpha }}}{C^{i}_\ind}\right\}
\eeao
and consequently
\begin{gather}\label{zwischenschritt}
\E  \min\left\{  \frac{\|Ae_j\|^\al}{C^S_\ind}, \kappa \frac{ A_{ij}^{\alpha}}{C^{i}_\ind}\right\}=
p_{ij} \E  \min\left\{  \frac{(1 + \sum_{k \ne i} \bone (k \sim j) )^{-\alpha + \frac{\alpha}{r} }}{C^S_\ind}, \kappa \frac{  { (1 + \sum_{k \ne i} \bone (k \sim j) )^{- \alpha }}}{C^{i}_\ind}\right\}.
\end{gather}
Now consider the function
$k(x) = \min\big\{  \frac{(1 +x )^{-\alpha + \frac{\alpha}{r} }}{C^S_\ind}, \kappa \frac{  { (1 + x)^{- \alpha }}}{C^{i}_\ind}\big\}.$
If $ C^S_\ind \ge 1$ or $  \frac{ \kappa}{C^{i}_\ind }\ge 1$   then $k(x) \in [0,1]$. 
In general,
$ 0 \le k(x) \le  \min\big\{  \frac{1}{C^S_\ind}, \frac{\kappa }{C^{i}_\ind} \big\} =M_1$
with $M_1$ as in \eqref{M}. Hence
$$ t(x) = {M_1}^{-1} k(x) = \max \left(C^S_\ind, \frac{ C^{i}_\ind}{\kappa} \right) k(x) \in [0,1].$$
Now we use a result from the Stein-Chen method to assess the distance to a Poisson distribution in total variation distance, Eq.~(1.23), p.~8, from \cite{Barbour_etal1992}. This result states that, if $W$  is the sum of $n$ independent Bernoulli random variables with success probabilities $p_i$ and $\E W = \lambda = \sum_{i=1}^n p_i$ and $Z \sim \Pois(\lambda)$, then
\beam \label{poissonkey} \sup_{h: \mathbb{Z}^+ \rightarrow [0,1] }| \E k(W)  - \E k (Z) | \le \min (1, \lambda^{-1}) \sum_{i=1}^n p_i^2.
\eeam
 Applying \eqref{poissonkey} to the function $t(x)$ and keeping \eqref{zwischenschritt} in mind yields \eqref{bound6}. 
The bound \eqref{bound7} follows similarly. 
 Finally,
 \beao
\min\left\{ \kappa  \frac{A_{ij}^\al}{C^i_\ind}, \frac{A_{kj}^{\alpha}}{C^k_\ind} \right\}
&=& \min\left\{ \kappa \, \frac{\frac{\bone (i\sim j)}{\deg(j)^{\alpha }} }{C^i_\ind},   \frac{\frac{\bone (k\sim j)}{\deg(j)^{\alpha }} }{C^k_\ind} \right\} \\
&=& M_3 \bone (i\sim j)\bone (k\sim j) \Big( 2 + \sum_{\ell \ne i,k} \bone (\ell \sim j)\Big)^{-\alpha}.
\eeao
 As the positive function
$  k (x)=(2+x)^{-\alpha} $ is bounded by 1  and
as $\sum_{\ell=1, \ldots, q; \ell \ne i, k} \mathds{1}(\ell \sim j)$ is a sum of independent Bernoulli variables, 
 \eqref{poissonkey} can be applied, and \eqref{bound5} follows.
 
For the Conditional Tail Expectations,  with \eqref{normkey}, 
\beao
A_{ij} \| A e_j\|^{\alpha - 1}
&=& \Big(\frac1{\deg(j)^{r-1}} \Big)^{\frac{\al-1}{r}}\frac{\bone (i\sim j)}{\deg(j)}
\, = \, \Big(\frac{\bone (i\sim j)}{\deg(j)}\Big)^{\frac{\alpha(r-1) + 1}{r}}    = A_{ij}^{\frac{\alpha(r-1) + 1}{r}}.
\eeao
Hence \eqref{bij} applies, and yields \eqref{bound3}. 
 Similarly, with \eqref{normkey},
 \beao
A_{ij}^{\alpha - 1} \|A e_j\| &=&  \frac{\bone (i\sim j)}{\deg(j)^{\alpha - 1}} \Big(\frac1{\deg(j)^{r-1}} \Big)^{\frac{1}{r}}=  A_{ij}^{\alpha - \frac{\ 1}{r}}.
\eeao
Again \eqref{bij} applies, and yields \eqref{bound4}. \\
For the last part we mimick the proof of Proposition 4.1 in \cite{KKR1}.
By the independence of the edges,
$$\E\big[[A_{kj}^{\alpha-1} A_{ij}\big] =\E\Big[ \bone(i \sim j) \bone (k \sim j) \frac{1}{\deg(j)^\alpha}\Big] =  p_{ij} p_{kj}  \E \Big[ \Big(2 + \sum_{\ell=1, \ldots, q; \ell \ne i, k} \mathds{1}(\ell \sim j)\Big)^{-\alpha} \Big] .$$
  Again 
\eqref{poissonkey} can be applied and the bound \eqref{lastbound} follows.
\eproof

\brem
Using \eqref{bound1} and \eqref{bound2} the constants $M_1, M_2, $ and $M_3$, as well as the expressions on the left-hand side of Proposition \ref{ex41prop}, could be bounded further if desired, by a  straightforward but tedious calculation. 
\erem

\brem
Proposition~\ref{ex41prop} gives an exact bound on the distance to Poisson; no asymptotic regime is suggested. Hence it can be interpreted in different asymptotic regimes.  

If the number $d$ of objects increases, while the number  $q$ of agents  is such that $q = o(d^{\frac12})$, and the number of objects which an agent would connect to, stays constant in expectation, in a fashion so that $p_{ij} \sim \frac{c}{d}$ for a fixed $c$, then$ B(j)$ and $B(i,j,k) $ are  of order $q d^{-2}$, $B(i,j)$ is of order $qd^{-3} $; as long as $q = o(d^{\frac12})$ the Poisson approximation will be suitable. 

Similarly if the number $q$ of agents increases and the number $D$ of objects only increases as $o(q^{\frac12})$ and if  $p_{ij} \sim \frac{c}{q}$ for a fixed $c$, the Poisson approximation would be suitable.
\erem

\subsubsection{Homogeneous independent  bipartite graph model: Illustrations}

To depict our results we consider the most basic case of the bipartite graph that the edges are not only independent but also equally likely; denote the edge probability with $p\in[0,1]$. We also call the edge probability the connectivity parameter, as it is directly proportional to the density of the network. In this network model all agents behave exchangeably.
For this model the market ranges from a market with no activity at all ($p=0$) to a complete graph
($p=1$).
Note that non-conditional risk measures on this type of network have already been studied in \cite{KKR1}.
Here, we are interested in the asymptotic expressions  given in Proposition~\ref{VaRcond} as well as in Theorem~\ref{asymprisk}  concerning the degree of tail dependence, conditional tail expectations and conditional Value-at-Risk, seen as a function of the edge probability $p$, and as a function of $\kappa$ where applicable.
 In all cases, for simplicity of exposition, we concentrate on the interaction of an agent given the systemic stress and vice versa. 
We use the following abbreviations for the right-hand asymptotic expressions of the  conditional systemic risk measures in  \eqref{ExpectedShortfallMulti} and  \eqref{ExpectedShortfallFi}, respectively: (\AS\ corresponds to an agent's risk to exceed its threshold given that the system exceeds its  threshold,  and  \SA\ to the system's  risk to exceed its threshold given that a specific agent exceeds its threshold):
\begin{align*}
&C^{\AS}_{\ind}=C^{\AS}_{\ind}(i)=\frac{\alpha}{\alpha-1}(C^S_{\ind})^{1/\al-1}
\sum_{j=1}^d K_j \E[A_{ij}\|Ae_j\|^{\al-1}] \gamma^{-1/\al},\\
& C^{\SA}_{\ind}=C^{\SA}_{\ind}(i)=\frac{\al}{\al-1}(C^i_{\ind})^{1/\al-1}\sum_{j=1}^{d}K_j \E[ A_{ij}^{\alpha-1}\|Ae_j\|]\gamma^{-1/\al}.
\end{align*}
Plots which depend on $p$ start at $p=0.01$. 

\begin{figure}[ht]
\subfigure{\includegraphics[height=4.5cm,width=0.49\textwidth]{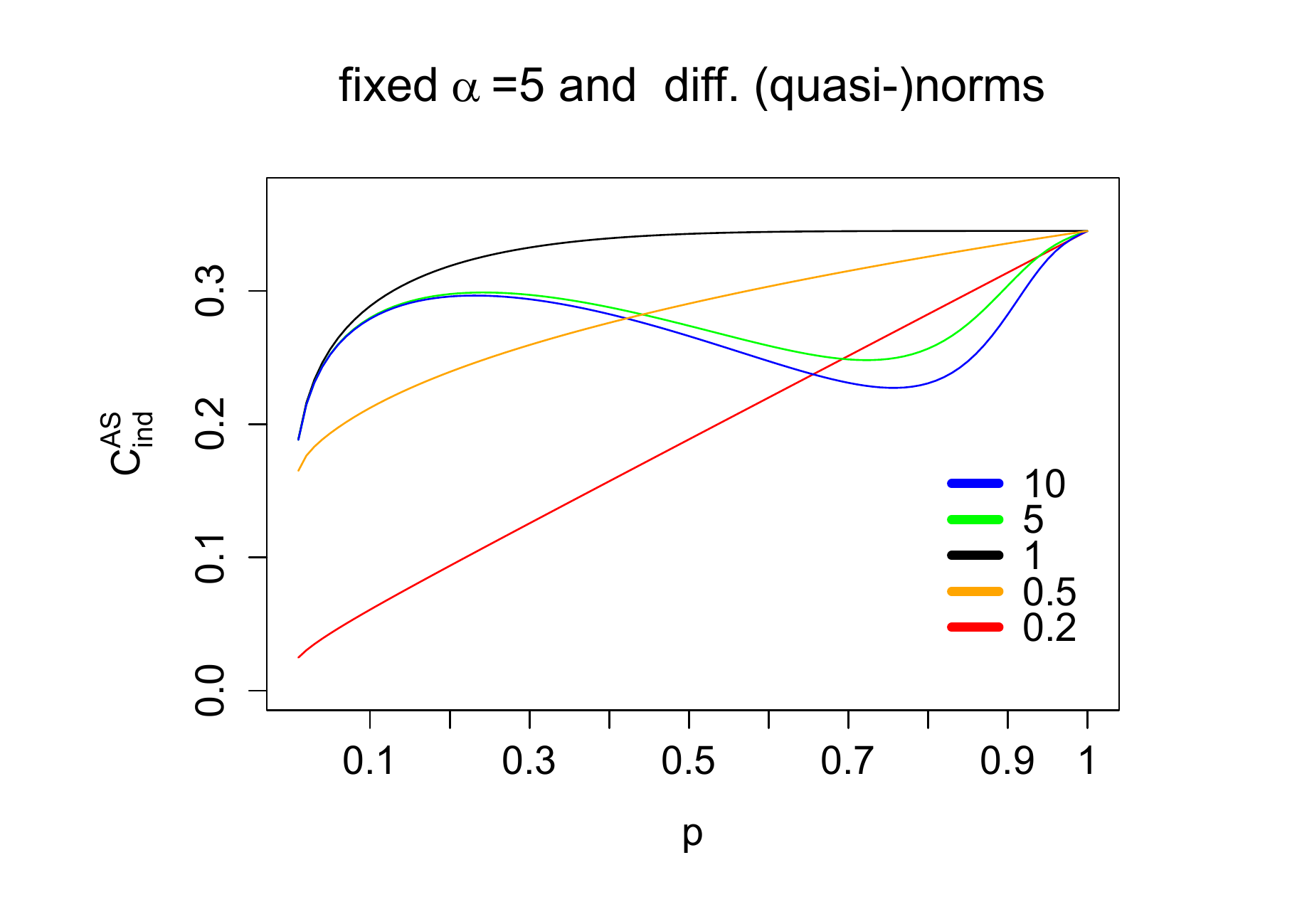}}
\subfigure{\includegraphics[height=4.5cm,width=0.49\textwidth]{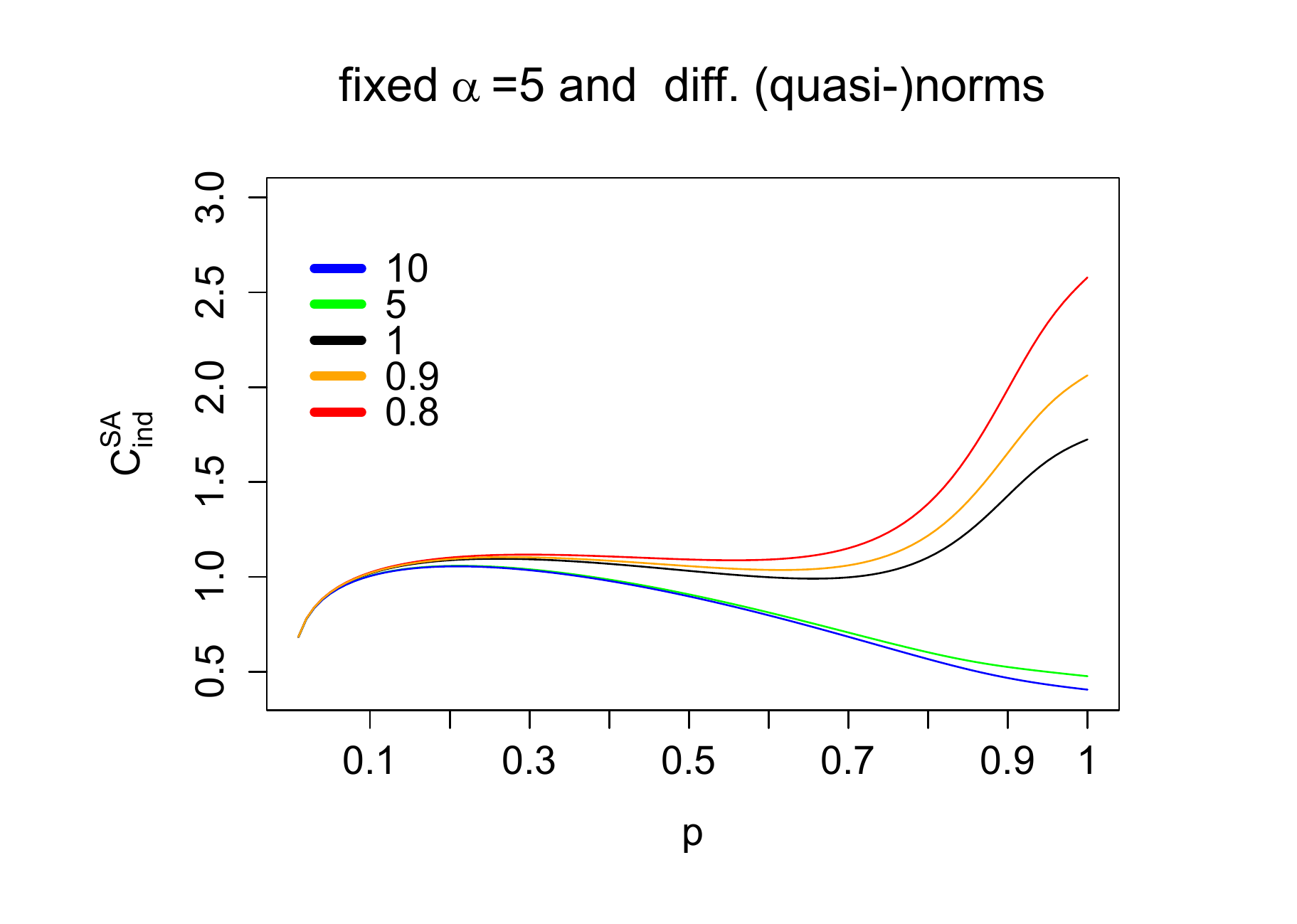}}
\caption{The risk constants $C_{\ind}^{\AS}$ (left)  and $C^{\SA}_{\ind}$ (right) for $\al=5$ (right) as a function of $p$ for different norms and quasi-norms. The plots start with $p=0.01$. Left: for $r > 1$ the curve is non-monotone. while for $r\le 1$ it is monotone increasing. Right: the curves are non-monotone for all values of $r$ considered. 
}
\label{CAS}
\end{figure}

In Figure~\ref{CAS}, both quantities are plotted as functions of the edge probability $p$ exemplarily for different (quasi-)norms while fixing the tail index $\alpha=5$. 
The left-hand plot shows the curves of $C^{\AS}_{\ind}.$ 
As the parameter $p$ increases, the connectivity in the network increases,  having two effects: Firstly,  more object are insured,
hence, the agents take a greater risk  load. Secondly, risk sharing among agents who jointly insure an object increases. We can then clearly recognise that the norms with $r>1$ favour diversification leading to a non-monotone behaviour of the curve as the result of the two compelling characteristics of a greater risk load and positive diversification effects, whereas in the quasi-norm case, diversification is punished and strengthens  the effect of a greater risk load.
The right-hand plot shows the curve of $C_{\ind}^{\SA}. $ Since we consider market losses aggregated by some (quasi-)norm, the losses even for the complete market depend on the norm parameter $r$, which is  one major difference to the left-hand plot. We also recognise the appearance of  non-monotone curves  even for quasi-norms and the sum norm.  In the quasi-norm case there is a (relatively high) value of $p$ at which  the risk constant $C^{\SA}$ has a local minimum.

\begin{figure}[ht]
\subfigure{\includegraphics[height=4.5cm,width=0.49\textwidth]{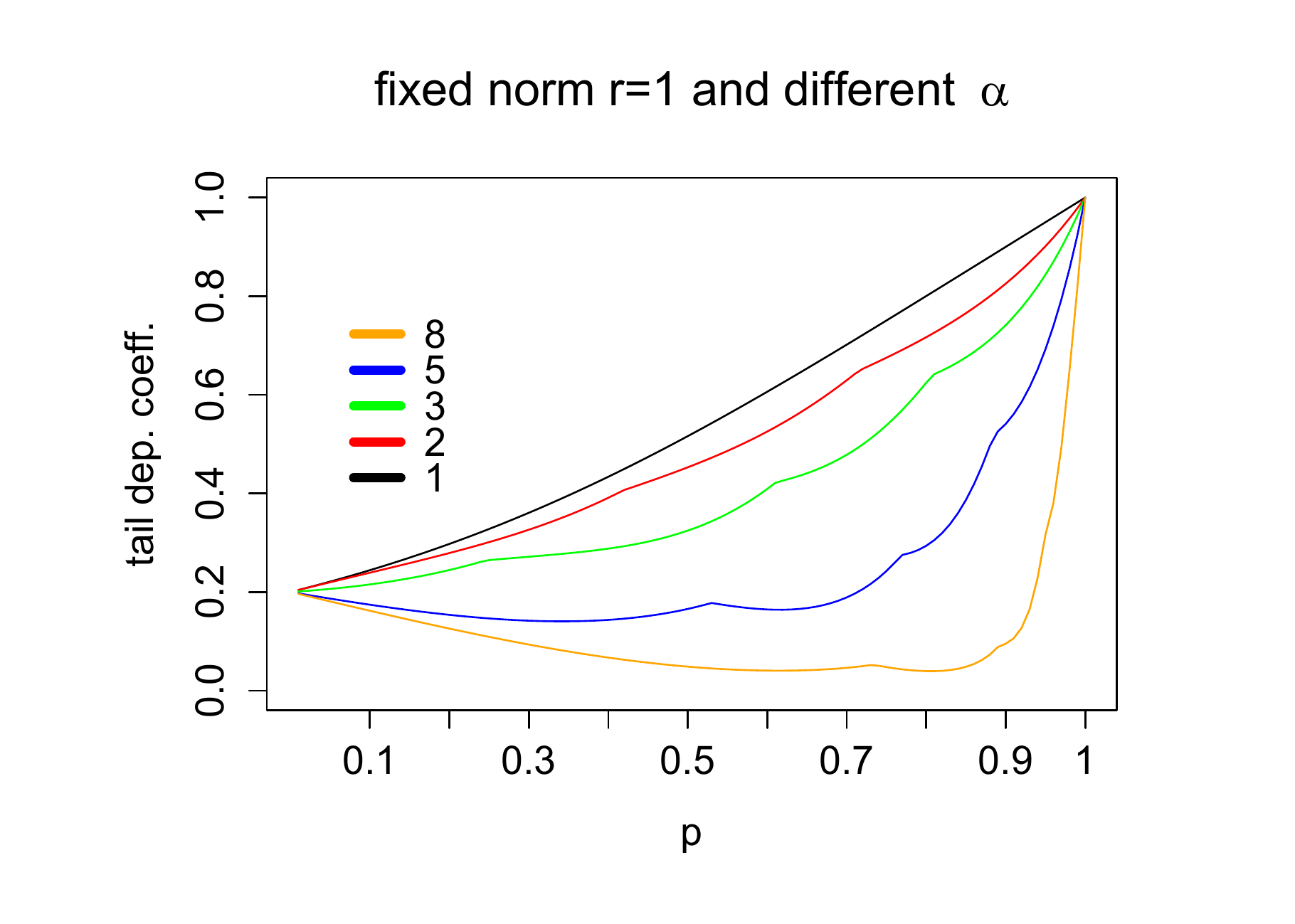}}
\subfigure{\includegraphics[height=4.5cm,width=0.49\textwidth]{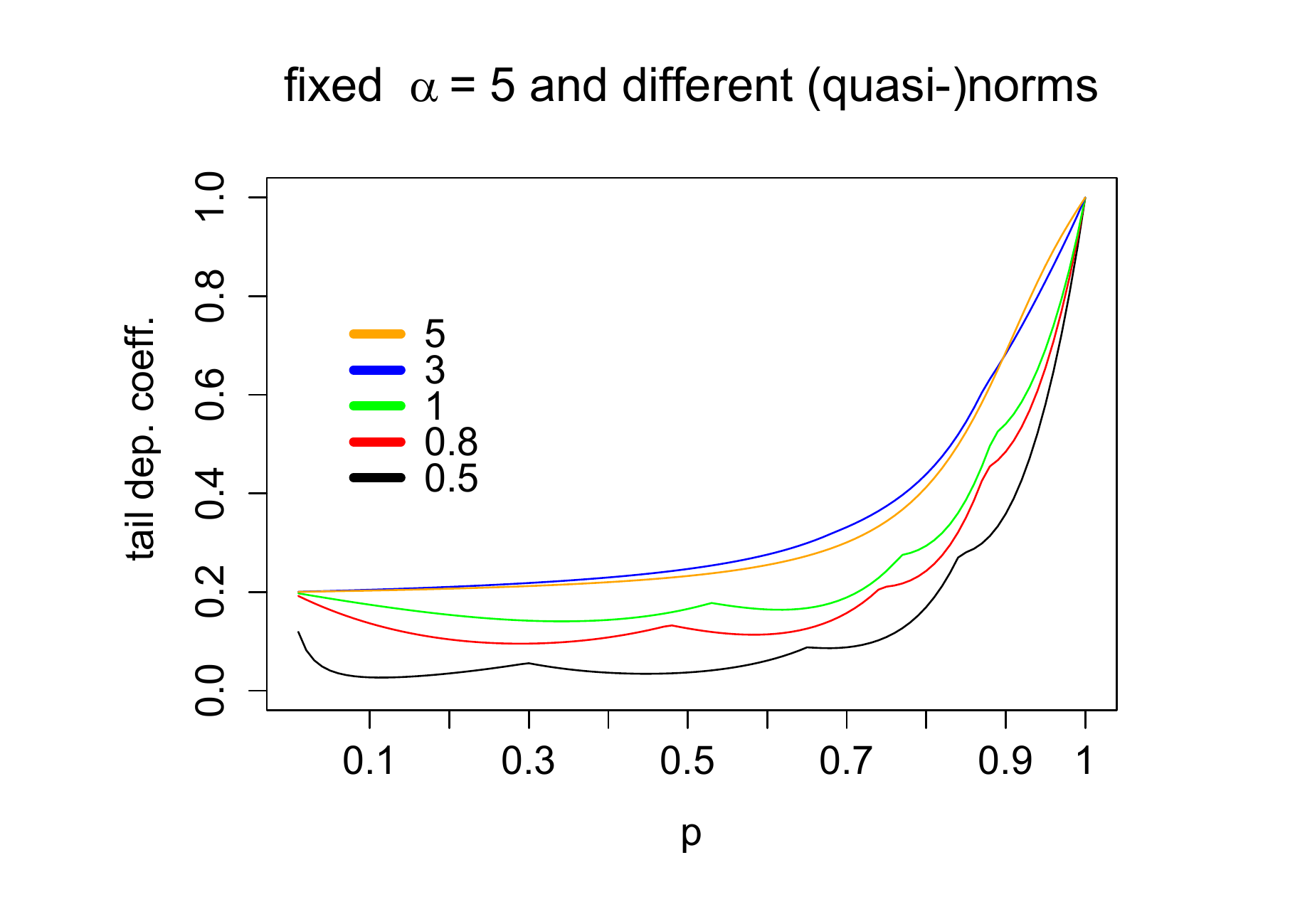}}
\caption{The tail dependence coefficient  from
\eqref{eq:VaR:FigivenFindep} with $\kappa=1$.   Left:  fixed sum-norm ($r=1$) and different tail indices $\alpha$; for small $\alpha$, the tail dependence coefficient  is almost linear, while for larger values of  $\alpha$, the curves are  non-monotone. 
Right: tail index $\alpha=5$ fixed and different (quasi-)norms. 
 The tail dependence coefficient is almost constant before increasing steeply. In both plots
 peaks only appear for the sum-norm and the quasi-norms. }
\label{tail_dep_coeff}
\end{figure}

 Figure~\ref{tail_dep_coeff} illustrates the symmetric tail dependence coefficient, which is \eqref{eq:VaR:FigivenFindep} for $\kappa=1$;i.e.,
$$
\pr{F_i>\VaR_{1-\ga }(F_i)\mid \|F\|> \VaR_{1-\ga}(\|F\|)}
 \to 
\sum_{j=1}^{d} K_j  \E \min\Big\{  \frac{\|Ae_j\|^\al}{C^S_\ind}, \frac{ A_{ij}^{\alpha}}{C^{i}_\ind}\Big\}
$$
as a function of the edge probability $p$. 
In the left-hand plot, which concentrates on the very natural sum-norm ($r=1$) and different tail indices $\alpha$, we observe that for small $\alpha$, the tail dependence coefficient  is almost linear in the network connectivity parameter $p$, but for larger values of  $\alpha$, the behaviour becomes  non-monotone and there exists  a locally optimal connectivity parameter $p$.   As effects of the minimum function, we see small peaks on the curves. In the right-hand plot, fixing $\alpha=5$, the tail dependence coefficient is almost constant before increasing steeply when the network is nearly a complete graph. Furthermore,  peaks, which occur through the minimum function, only appear for the sum-norm and the quasi-norms here.

\begin{figure}[ht]
\subfigure{\includegraphics[height=6cm,width=0.5\textwidth]
{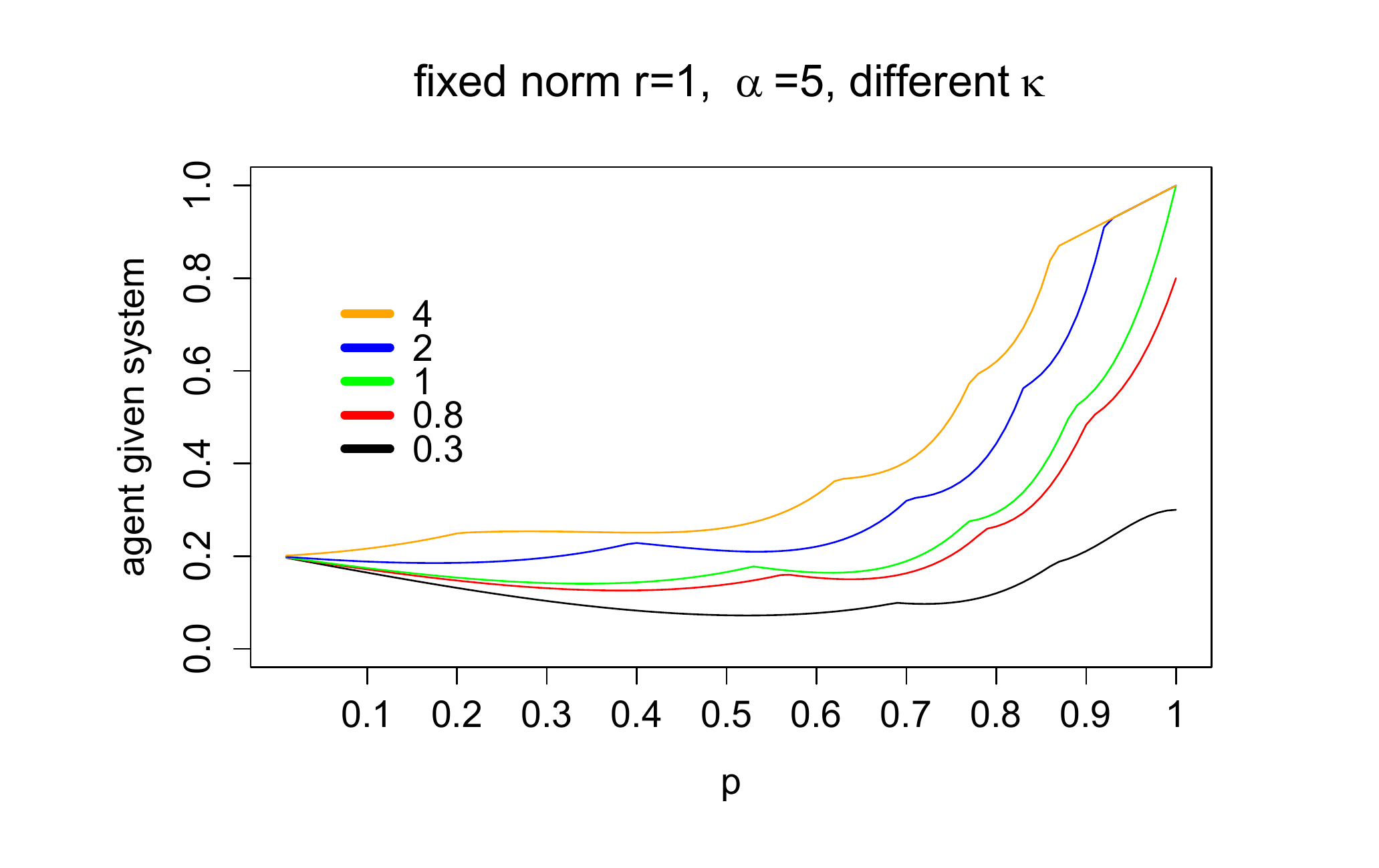}}
\subfigure{\includegraphics[height=6cm,width=0.5\textwidth]
{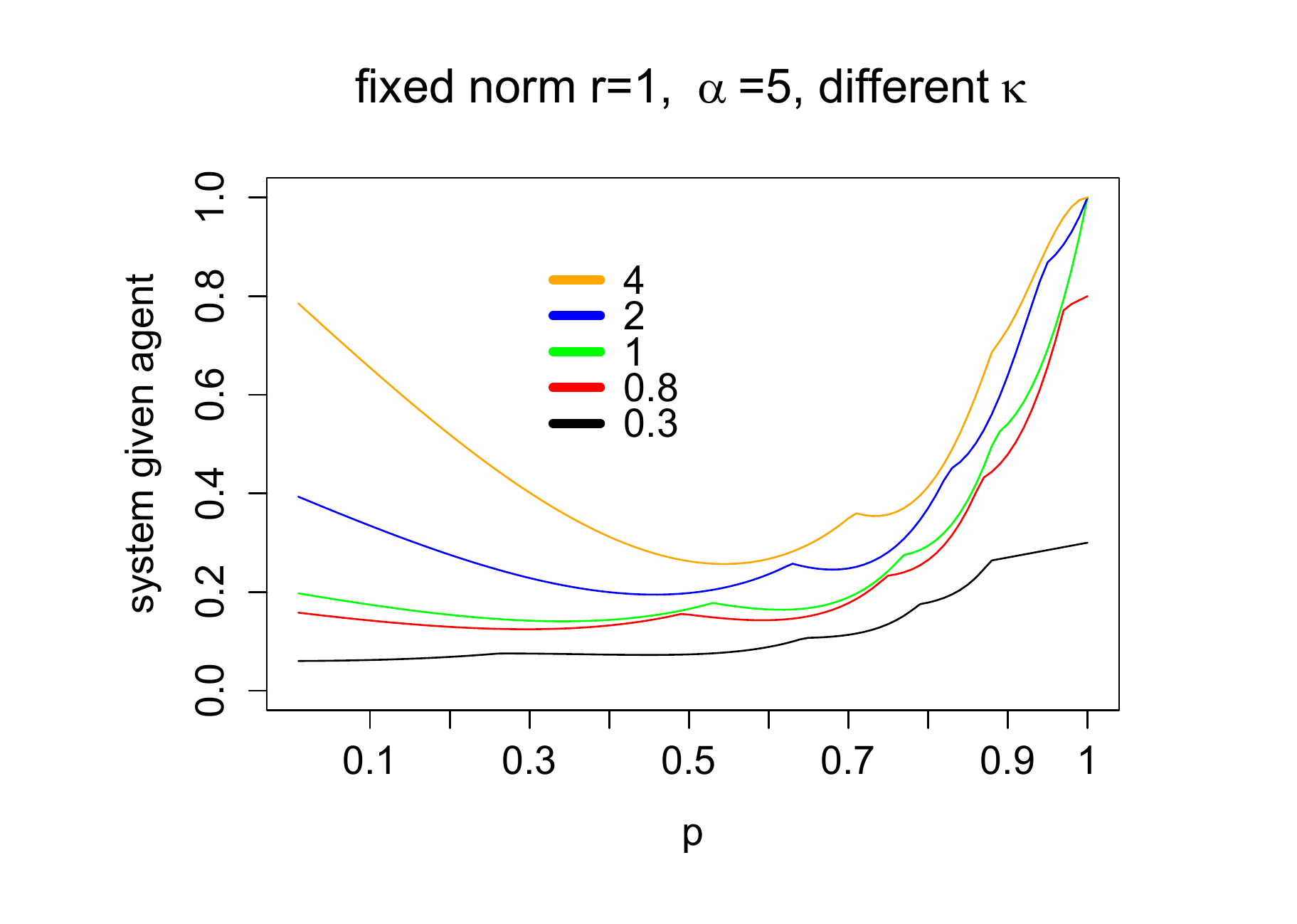}}
\caption{Asymmetric probablities of tail dependence from Proposition~\ref{VaRcond} for different values of $\kappa$, with $\alpha =5$ and taking the sum norm. 
Left: agent given system. As $p \rightarrow 0$, all curves converge to the same value $0.2$. As $p \rightarrow 1$, for $\kappa \ge 1$ the curves merge into one single curve, and for $\kappa < 1$ the curves converge to $\kappa$. 
Right: system given agent. For small $p$ the curves are well separated. For $p \rightarrow 1$ the curves tend to $\min(1, \kappa)$. 
} 
\label{asymtaildep_diffkappa}
\end{figure}

Allowing for asymmetry of the tail dependence coefficient in  Proposition~\ref{VaRcond}  through the value $\kappa$ potentially differing from 0, we illustrate the results versions associated with
agent behaviour conditional on system distress and vice versa; i.e., the right-hand sides  in \eqref{eq:VaR:FigivenFindep} (agent given system) and \eqref{eq:VaR:FgivenFiindep} (system given agent), which  read as
\beao
\pr{F_i>\VaR_{1-\ga \kappa}(F_i)\mid \|F\|> \VaR_{1-\ga}(\|F\|)}
& \to &
\sum_{j=1}^{d} K_j  \E \min\Big\{  \frac{\|Ae_j\|^\al}{C^S_\ind}, \kappa \frac{ A_{ij}^{\alpha}}{C^{i}_\ind}\Big\},\ \gamma \to 0,\\
 \pr{\|F\|> \,\VaR_{1-\kappa \ga}(\|F\|)\mid F_i>\VaR_{1-\ga  }(F_i)}
& \to &
\sum_{j=1}^{d} K_j \E \min\Big\{{\kappa} \frac{\|Ae_j\|^\al}{C^S_\ind},   \frac{A_{ij}^{\alpha}}{C^{i}_\ind} \Big\},\ \gamma \to 0.
\eeao

Figure~\ref{asymtaildep_diffkappa} shows both quantities as functions of the edge probability $p$, exemplarily for the tail index $\alpha= 5$ and the sum norm.
In the left-hand plot---agent given system---all curves apparently converge to the same point as $p\to 0$ which in our case is close to $0.2$. Hence, the influence of $\kappa$ diminishes as the network gets less  connected. As we move to the complete network; i.e. for $p\rightarrow 1$,  for all $\kappa \geq 1,$ the curves converge to one single curve, which is clear from the formulas, and in the case of $\kappa<1,$ the curves converge to $\kappa$, which can  be recognized  as the work of the minimum function. Contrary to the left-hand plot, in the right-hand plot---system given agent---the values for less connected networks; i.e., small values of $p$, lie far apart from each other. This observation can be explained as follows: for small $p$ it would not be unusual to see some $A_{ij}=0$, and it would not be very surprising to see the empty network, with $A=0$. The probability that the norm of $F$ is positive is larger than the corresponding probability of a single explosure $F_i$, hence,  $\kappa$ is multiplied by the factor which has the greater probability to be non-zero. In both pictures,  the curves for $\kappa\neq 1$ are dilated to the left and to the right for $\kappa >1$ and $\kappa<1$, respectively, compared to the symmetric case $\kappa=1$;  the directions of dilation are different in the left and the right plot.

\begin{figure}[ht]
\subfigure{\includegraphics[height=6cm,width=0.5\textwidth]
{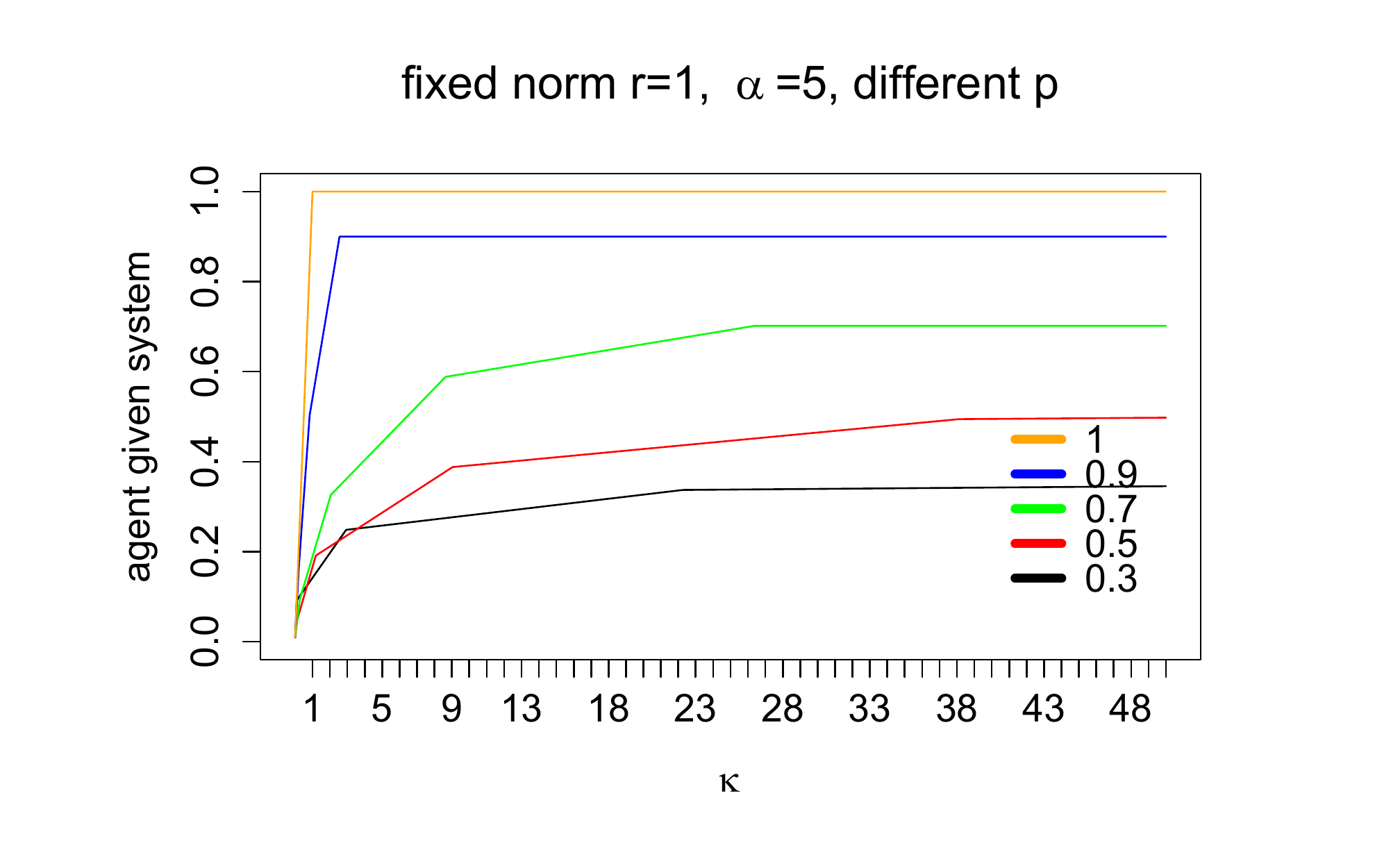}}
\subfigure{\includegraphics[height=6cm,width=0.5\textwidth]
{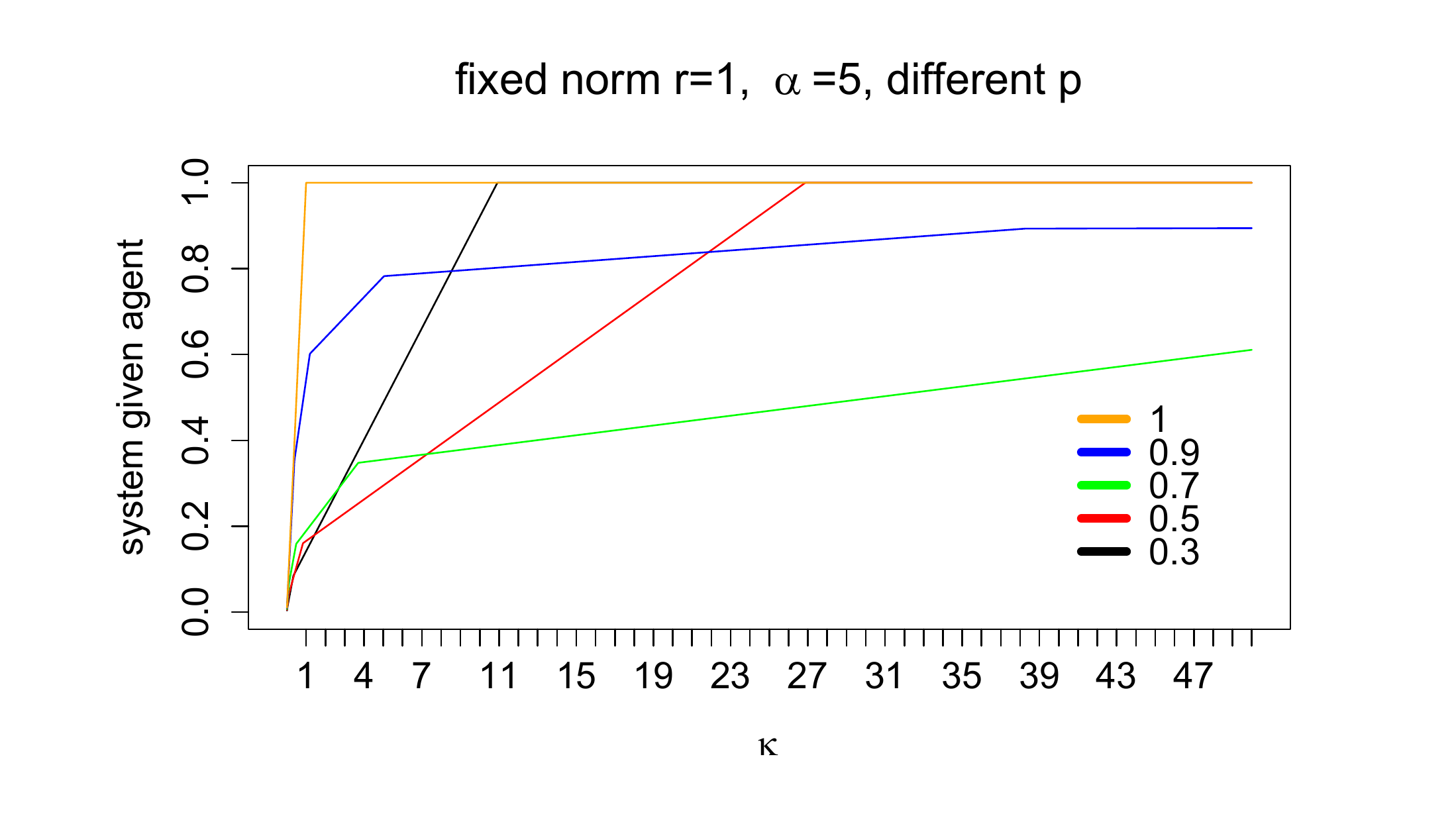}}
\caption{Asymmetric probablities of tail dependence from Proposition~\ref{VaRcond} for different values of  the edge probability  $p$. Left: agent given system. The curves are piecewise linear and converge to a value close to $p$ as $\kappa \rightarrow \infty$. Right: system given agent.  The curves are horizontal at level 1 for $\kappa$ sufficiently large.  }
\label{tail_dep_kappa_bigandrunning}
\end{figure}

\begin{figure}[ht]
\subfigure{\includegraphics[height=6cm,width=0.5\textwidth]
{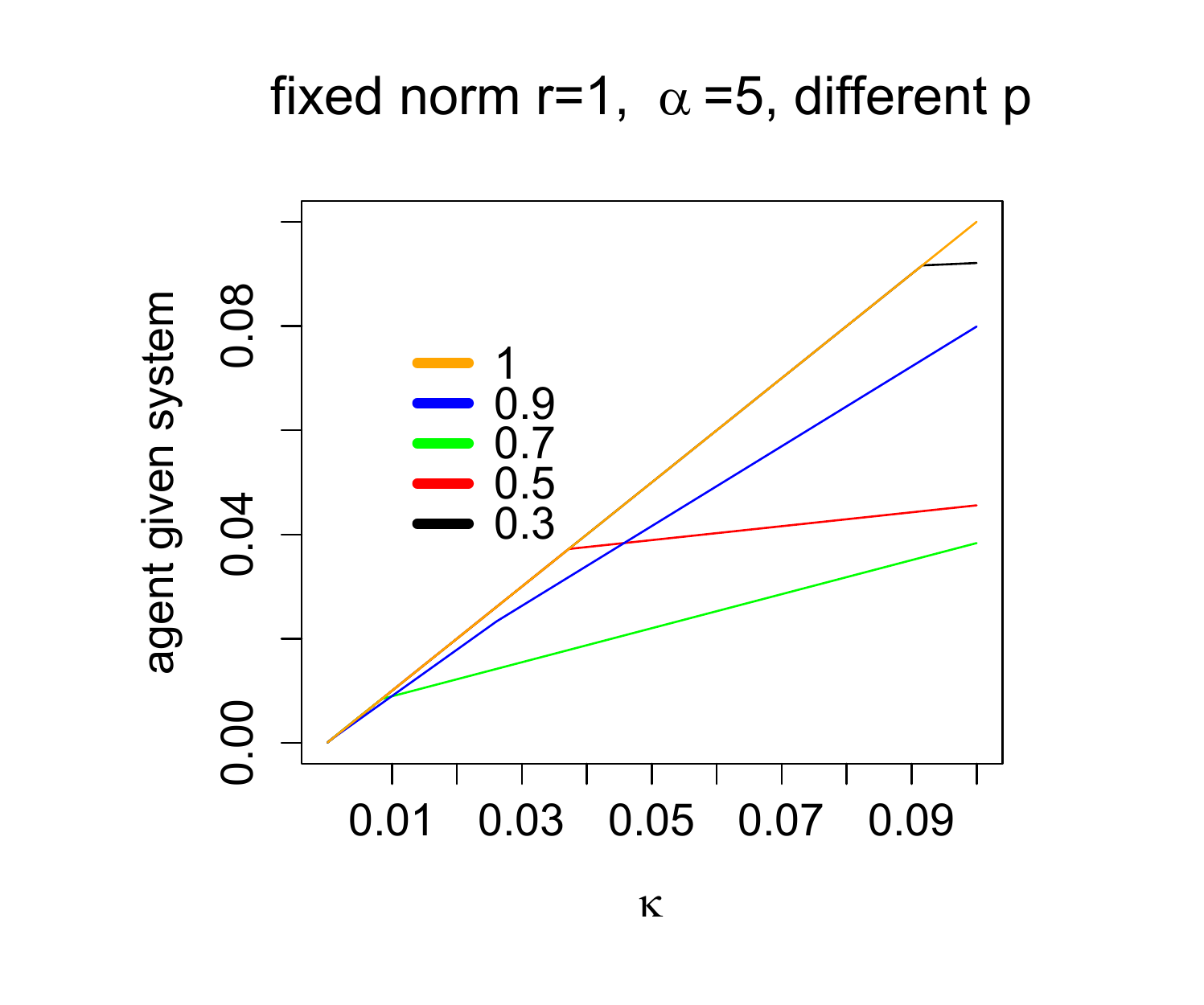}}
\subfigure{\includegraphics[height=6cm,width=0.5\textwidth]
{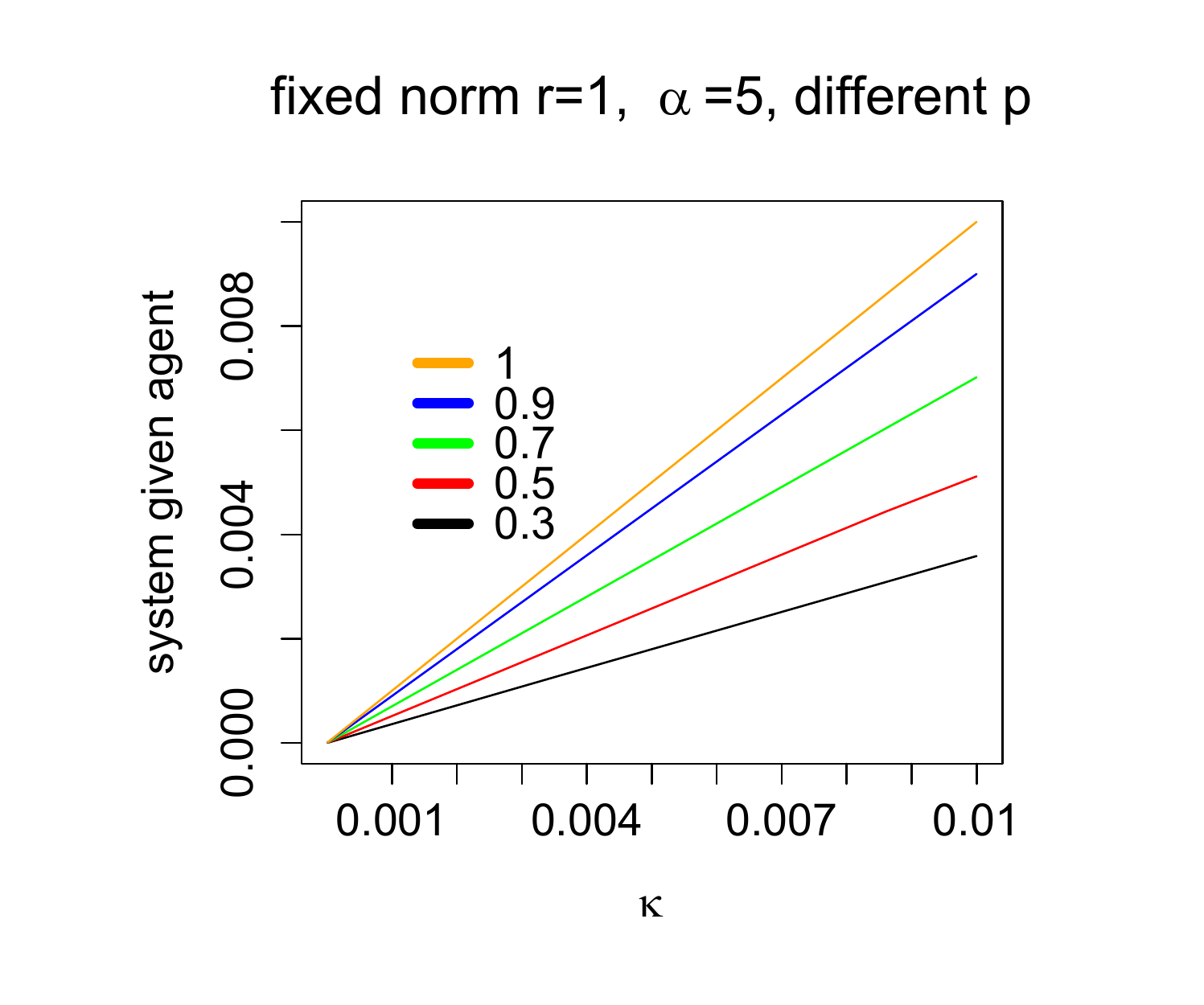}}
\caption{Asymmetric probablities of tail dependence from Proposition~\ref{VaRcond} for different values of  the edge probability $p$. The plot  starts with $\kappa=0.001$ and $\kappa=0.0001$, respectively.  Left: agent given system. For $\kappa$ sufficiently small all curves turn to a slope of 1. Right: system given agent. All slopes are different and close to the respective $p$. 
 }
\label{tail_dep_kappa_smallandrunning}
\end{figure}

Figure~\ref{tail_dep_kappa_bigandrunning} 
studies  the same quantities but now as a function of $\kappa$, with some exemplary values for $p$. The resulting curves are  piecewise linear as a result from taking the expectation. For deterministic matrices one would see only one line with particular slope before turning horizontal. 
For the risk of agent conditioned on system the curves are finally constant at level $p$, while for risk of system given agent the curves are horizontal at level 1 for $\kappa$ sufficiently large.

Figure~\ref{tail_dep_kappa_smallandrunning} depicts 
 the behaviour of theses curves for $\kappa$ near to zero, which is  connected to the asymptotics of the conditional Value-at-Risk in Proposition~\ref{asymprisk}  through Theorem~\ref{help_covar}. 
 The left-hand plot in Figure~\ref{tail_dep_kappa_smallandrunning} shows that for $\kappa$ sufficiently small all curves turn to a slope of 1. 
 This fact is reflected in the asymptotics of $\ICoVaR_{1-\gamma, \gamma_i}(F_{i}\mid \|F\| )$ by the absence of an additional factor. In contrast in the right-hand plot we observe different slopes, close to $p$ in each case. 
 These different slopes enter the formula for the $ \SCoVaR_{1-\gamma_i, \gamma}$ as the $\tau(i)$ from \eqref{taui}; in the homogeneous model $\tau(i)= p$.

\subsection*{Acknowledgements}

CK would like to thank Keble College, Oxford, for their support  through a Senior Research Visitorship. 
GR acknowledges support from EPSRC grant EP/K032402/1 as well as from the Oxford Martin School programme on Resource Stewardship.

\bibliography{bibgesine}

\begin{thebibliography}{10}

\bibitem{CoVar}
T.~Adrian and M.K. Brunnermeier.
\newblock Covar.
\newblock Working Paper 17454, National Bureau of Economic Research, October
  2011.

\bibitem{coherent}
P.~Artzner, F.~Delbaen, J.-M. Eber, and D.~Heath.
\newblock {Coherent Measures of Risk}.
\newblock {\em Mathematical Finance}, 9(3):203--228, 1999.

\bibitem{Barbour_etal1992}
A.D. Barbour, L.~Holst, and S.~Janson.
\newblock {\em Poisson Approximation}.
\newblock Oxford University Press, Oxford, 1992.

\bibitem{BasrakPhD}
B.~Basrak.
\newblock {\em The Sample Autocorrelation Function of Non-Linear Time Series}.
\newblock PhD thesis, Rijksuniversteit Groningen, NL, 2000.

\bibitem{Basrak200295}
B.~Basrak, R.A. Davis, and T.~Mikosch.
\newblock Regular variation of {GARCH} processes.
\newblock {\em Stochastic Processes and their Applications}, 99(1):95 -- 115,
  2002.

\bibitem{Beirlant}
J.~Beirlant, Y.~Goegebeur, J.~Segers, and J.~Teugels.
\newblock {\em {Statistics of Extremes: Theory and Applications}}.
\newblock Wiley Series in Probability and Statistics. Wiley, Chichester, 2006.

\bibitem{Brownlees}
C.~T. Brownlees and R.~Engle.
\newblock Volatility, correlation and tails for systemic risk measurement.,
  2010.
\newblock Working Paper Series, Department of Finance, NYU.

\bibitem{axiomsystemic}
C.~Chen, G.~Iyengar, and C.C. Moallemi.
\newblock An axiomatic approach to systemic risk.
\newblock {\em Management Science}, 59(6):1373--1388, 2013.

\bibitem{DJVZ}
J.~Danielsson, K.~R. James, M.~Valenzuela, and I.~Zer.
\newblock {Model Risk of Risk Models}.
\newblock {\em Available at SSRN 2425689}, 2014.

\bibitem{hannes}
H.~Hoffmann, T.~Meyer-Brandis, and G.~Svindland.
\newblock Risk-consistent conditional systemic risk measures.
\newblock Preprint, University of Munich, Germany, 2014.

\bibitem{Huang}
X.~Huang, H.~Zhou, and H.~Zhu.
\newblock Systemic risk contributions.
\newblock {\em Available at SSRN 1650436}, 2011.

\bibitem{Ibragimov2005}
R.~Ibragimov.
\newblock Portfolio diversification and value at risk under thick-tailedness.
\newblock {\em Quantitative Finance}, 9(5):565--580, 2009.

\bibitem{jouini}
E.~Jouini, M.~Meddeb, and N.~Touzi.
\newblock Vector-valued coherent risk measures.
\newblock {\em Finance and Stochastics}, 8(4):531--552, 2004.

\bibitem{KK3}
O.~Kley and C.~Kl\"uppelberg.
\newblock Bounds for randomly shared risk of heavy-tailed loss factors, 2015.
\newblock arXiv:1503.03726.

\bibitem{KKR1}
O.~Kley, C.~Kl\"uppelberg, and G.~Reinert.
\newblock Systemic risk in a large claims insurance market with bipartite graph
  structure, 2014.
\newblock arXiv:1410.8671v2.

\bibitem{overbeck}
E.~Kromer, L.~Overbeck, and K.A. Zilch.
\newblock Systemic risk measures on general probability spaces.
\newblock {\em Available at SSRN 2268105}, 2013.

\bibitem{sigma}
Swiss Re.
\newblock {\em sigma} study.
\newblock
  \url{http://www.swissre.com/media/news_releases/Insured_losses_from_disasters_below_average_in_2014.html#inline},
  2015.

\bibitem{Resnick2007}
S.I. Resnick.
\newblock {\em Heavy-Tail Phenomena}.
\newblock Springer, New York, 2007.

\bibitem{uninsured}
S.~von Dahlen and G.~von Peter.
\newblock Natural catastrohpes and global reinsurance - exploring the linkages.
\newblock {\em BIS Quarterly Review}, 2012.

\bibitem{ZhuLi}
L.~Zhu and H.~Li.
\newblock Asymptotic analysis of multivariate tail conditional expectations.
\newblock {\em North American Actuarial Journal}, 16(3):350--363, 2012.

\end{thebibliography}
\bibliographystyle{plain}

\end{document}